\documentclass[11pt]{article}

\usepackage[T1]{fontenc}
\usepackage[utf8]{inputenc}
\usepackage{lmodern}
\usepackage{amsmath,amssymb,amsthm,mathtools,bm}
\usepackage{booktabs}
\usepackage{array}
\usepackage{microtype}
\usepackage[a4paper,margin=25mm]{geometry}
\usepackage{graphicx}
\usepackage[hidelinks]{hyperref}
\hypersetup{pdfauthor={Marina O. Gordeichuk and Oleg M. Kiselev},
 pdftitle={Measurement-dependent reduction of open dynamics in the quantum phase estimation algorithm: the visible subspace and the damped Fejer kernel}}

\newtheorem{theorem}{Theorem}[section]
\newtheorem{proposition}[theorem]{Proposition}
\newtheorem{corollary}[theorem]{Corollary}

\theoremstyle{definition}
\newtheorem{definition}[theorem]{Definition}
\newtheorem{example}[theorem]{Example}
\theoremstyle{remark}
\newtheorem{remark}[theorem]{Remark}

\newcommand{\Tr}{\operatorname{Tr}}
\newcommand{\Span}{\operatorname{span}}
\newcommand{\id}{\mathrm{id}}
\newcommand{\cL}{\mathcal L}
\newcommand{\cK}{\mathcal K}
\newcommand{\cE}{\mathcal E}
\newcommand{\cH}{\mathcal H}

\newcommand{\ket}[1]{\lvert #1\rangle}
\newcommand{\bra}[1]{\langle #1\rvert}
\newcommand{\proj}[1]{\lvert #1\rangle\!\langle #1\rvert}
\newcommand{\norm}[1]{\left\lVert #1\right\rVert}

\begin{document}

\title{Measurement-dependent reduction of open dynamics in the quantum phase estimation algorithm: the visible subspace and the damped Fej\'er kernel}

\author{Marina~O.~Gordeichuk\thanks{Corresponding author: \href{mailto:m.lisnichenko@innopolis.university}{m.lisnichenko@innopolis.university}.
}
\and Oleg~M.~Kiselev\thanks{Email: \href{mailto:o.kiselev@innopolis.ru}{o.kiselev@innopolis.ru}.}}

\date{}
\maketitle
\begin{center}\small
Innopolis University, 1 Universitetskaya Street,\\
Innopolis 420500, Republic of Tatarstan, Russia
\end{center}

\begin{abstract}
For the quantum phase estimation (QPE) algorithm we study the minimal space of
observables that is sufficient to predict the entire output distribution under
open dynamics of the counting register. Building on the general Krylov theory
of reduction, we compute the structure of this space explicitly for the full
Fourier measurement. The outcome probabilities are determined by cyclic sums
of the state matrix. If the adjoint generator is diagonal on the matrix units,
the spectral classes inside the residue classes of the index difference modulo
the register dimension $M$ give a constructive basis, an exact dimension
formula, and attainable bounds: the dimension of the visible space lies
between $M$ and $M^2-M+1$. The class includes independent qubit dephasing, and
an analytically solvable reference model of Gorini--Kossakowski--Lindblad--Sudarshan
(GKLS) type yields a damped Fej\'er kernel; two random-unitary channels with
the same Ramsey contrast show that higher harmonics are distinguishable in the
QPE distribution. For approximate reduction we use a finite-time Gramian and an
a posteriori residual estimate, with a quadratic error order for a reducing
unperturbed subspace and a compatible initial state. Numerical calculations
check the consistency of the representations and the behaviour of the error
under non-Toeplitz perturbations.
\end{abstract}
\noindent\textbf{Keywords:} quantum phase estimation; quantum Markov semigroups;
observable subspace; visibility Gramian; Krylov reduction; Fej\'er kernel;
open quantum systems.

\section{Introduction}

The quantum phase estimation (QPE) algorithm is one of the principal spectral
subroutines of quantum computing. In the canonical scheme a counting register
of $n$ qubits resolves the phase on a grid of $M=2^n$ points by means of
controlled powers of a unitary operator and the inverse quantum Fourier
transform (QFT) \cite{Kitaev1997,NielsenChuang}. The ideal grid step $M^{-1}$
sets the resolution of the ideal phase grid. Long controlled evolutions,
an inexact implementation of the Hamiltonian, heating of vibrational modes,
dephasing, relaxation and readout errors act at different stages of the
protocol and change the measured distribution in different ways.

In \cite{GordeychukKiselev2026} a stability condition of the type
$\epsilon M\ll1$ was obtained for an operator perturbation of size
$\epsilon$. For open dynamics one has to account separately for the coherent
phase shift, the continuous dissipative generator and the classical readout
channel. Long time branches make QPE more sensitive to these mechanisms.

Here we construct a reduction of observables. The main
question is formulated as follows: which modes of the full Liouvillian survive
a fixed QPE measurement? For teleportation an analogous operational idea leads
to modes that are visible through the populations of the Bell states. In QPE
the geometry is different: the final measurement after the inverse Fourier
transform depends only on cyclic sums of entries of the counting-register
density matrix. Dynamics can make an initially invisible state difference
visible later. The invariant closure of the measurement effects determines
the outcome probabilities at all times.

A single-step Ramsey interference experiment probes the first harmonic of
coherence transfer $g_1$: the contrast determines $|g_1|$, and phase-sensitive
measurements determine its argument.
The QPE distribution depends on cyclic combinations of several
harmonics $g_1,\ldots,g_{M-1}$.
Proposition~\ref{prop:ramsey-separation-ru} below gives two physical
random-unitary channels with the same Ramsey contrast and different QPE
distributions. Thus the multiscale structure of the controlled powers of $U$
makes the visible reduction natural precisely for QPE.

The main contribution is the explicit computation of the minimal space of
observables of the full QPE positive operator-valued measure (POVM) for generators that are diagonal on the
matrix units: cyclic spectral classes give its basis and its exact dimension.
The general Krylov construction and the minimality criterion are used as a
known foundation \cite{Jamiolkowski1983,Grigoletto2025}; when diagonality is
violated, a posteriori control of the approximate reduction is applied.
The main results are as follows. We obtain an exact QPE reduction to cyclic
sums of diagonals and prove that channels with the same Ramsey contrast can be
told apart by the QPE distribution. For generators that are diagonal on the
matrix units we prove the spectral decomposition of the QPE-visible space and
the exact formula for its dimension; the Toeplitz law and the
Gorini--Kossakowski--Lindblad--Sudarshan (GKLS) generator of the damped
Fej\'er kernel follow as a special case. We construct a finite-time visibility
Gramian whose spectral truncation is optimal in the integral norm, and we prove
a quadratic law for the total weight of new modes under a small perturbation of
the generator. Finally, we prove an a posteriori stability estimate in terms of
the invariance residual, together with a special quadratic estimate, and
compare the full propagator, the reduced model and the analytic kernel
numerically, including a non-Toeplitz GKLS perturbation at $M=8,16,32$.

\subsection{Relation to previous work and novelty}

The algebraic basis of the construction is related to the observability of
linear systems and to stroboscopic tomography of open quantum dynamics
\cite{Jamiolkowski1983,Czerwinski2017}. The general theory of exact reduction
of continuous-time open quantum dynamics through Krylov spaces of initial
states and observables was developed in \cite{Grigoletto2025}; extending these
spaces to operator algebras ensures there a physically admissible reduced
generator. The space \eqref{eq:visible-space-ru} is the observable part of
that construction for a fixed POVM. Here its explicit computation for the full
vector of QPE outcomes is carried out.

Discrete quantum channels and controlled time-dependent generators are covered
by the general reduction schemes of
\cite{GrigolettoTicozzi2025,GrigolettoViolaTicozzi2025}. The approach of
\cite{FanTureci2026} builds an effective equation from the finite time
resolution of the measurement channel. In the present work the measurement is
fixed to be the full QPE POVM, and the reduction preserves exactly the whole
vector of its probabilities;
Theorem~\ref{thm:diagonal-generator-dimension-ru} computes the corresponding
minimal linear model in closed form.

Approximate reduction onto the central subspace of an unperturbed generator
was considered in \cite{GrigolettoSarlette2026}. In particular, Theorem~2(ii)
of the cited version gives an error estimate for the projected dynamics that is quadratic
in the perturbation for a compatible initial state. In the present work the
reducing subspace is chosen according to the QPE observables and may include
decaying modes; the estimate is applied directly to the outcome probabilities.
The quadratic order itself has a general projection--perturbation mechanism.

Krylov approximations of the matrix exponential and their convergence were
studied in \cite{Saad1992,HochbruckLubich1997}, and projection-based reduction
of linear dynamical systems was systematized in \cite{Bai2002}. Observability
Gramians, principal components of trajectories and spectral truncation underlie
the classical theory of reduction \cite{Moore1981,Antoulas2005}. Residual
criteria and a posteriori estimates for the action of the matrix exponential
were constructed in \cite{Botchev2013,JiaLv2015}.
Theorem~\ref{thm:residual-stability-ru} transfers this principle to the space
of quantum observables and directly bounds each POVM probability
simultaneously for all initial states of a given purity.

The authors' previous work \cite{GordeychukKiselev2026} established stability
estimates for QPE with respect to an operator perturbation of the input unitary
operator. The present work moves the question to the class of open GKLS
dynamics \cite{Gorini1976,Lindblad1976,BreuerPetruccione} and singles out the
observable part of the generator. Works on noisy QPE and sequential quantum
channels \cite{OBrien2019,Jin2025,Patel2026} study the stability and
modifications of the protocol. Here the full QPE POVM is retained, and its
output statistics factorize through the minimal space of observables.
Phase noise in Fourier-based QPE and its influence on the choice of the
computational resource were studied in \cite{ChapeauBelin2020}. In
\cite{Smith2025} a Toeplitz matrix formed by the Fourier coefficients of the
loss function determines the input state with minimal risk. Here the Toeplitz
structure arises in the coherence model, while the cyclic sums are determined
by the POVM itself and apply to an arbitrary state matrix.

For the full QPE POVM the general Krylov construction
\eqref{eq:visible-space-ru} is computed in closed form. The residue classes
$j-\ell\bmod M$ that arise from the discrete Fourier transform of the outcomes
give an explicit basis $B_{r,\zeta}$, an exact dimension formula, sharp bounds
and a Toeplitz specialization. The finite-time Gramian provides a stable
spectral scale, and its quadratic tail links the exact rank with approximate
reduction.

\section{QPE as a fixed measurement projection}

Let $M=2^n$, and let $\cH_C\cong\mathbb C^M$ be the space of the counting
register. The work register stores the state of the system to which $U$ is
applied; the counting register of $n$ qubits encodes the phase estimate by the
outcomes $k/M$. The vector $\ket j$ denotes the $j$-th computational basis
vector, labelled by the binary representation of the number $j$.
Let $\cH_S$ be the space of the work register and
\[
 U\ket{\psi_\omega}=e^{2\pi i\omega}\ket{\psi_\omega},
 \qquad \omega\in[0,1).
\]
After the Hadamard layer and the controlled powers of $U$ the ideal state of
the counting register has the form
\begin{equation}
 \ket{\phi_\omega}
 =\frac{1}{\sqrt M}\sum_{j=0}^{M-1}e^{2\pi i j\omega}\ket j.
 \label{eq:clock-state-ru}
\end{equation}
We define the Fourier transform by the convention
\[
 F_M\ket k=\frac{1}{\sqrt M}\sum_{j=0}^{M-1}e^{2\pi i jk/M}\ket j.
\]
If the inverse Fourier transform is moved onto the measurement operators, one
obtains the effects
\begin{equation}
 E_k:=F_M\proj{k}F_M^\dagger,
 \qquad k=0,\ldots,M-1.
 \label{eq:qpe-effects-ru}
\end{equation}
The effects $E_k\ge0$, $\sum_kE_k=I$ form a POVM; here they are
orthogonal rank-one projectors. For a state $\rho_C$
immediately before the inverse QFT the outcome probability is
\begin{equation}
 p_k=\Tr(\rho_CE_k).
 \label{eq:qpe-projection-ru}
\end{equation}

Substituting the state \eqref{eq:clock-state-ru} gives the Fej\'er kernel
\begin{equation}
 p_k^{(0)}(\omega)
 =\frac{1}{M^2}
 \frac{\sin^2\!\bigl(\pi M(\omega-k/M)\bigr)}
 {\sin^2\!\bigl(\pi(\omega-k/M)\bigr)},
 \label{eq:ideal-fejer-ru}
\end{equation}
extended by the value $p_k^{(0)}=1$ when the phase coincides exactly with a
grid node. Formula \eqref{eq:qpe-projection-ru} is the operational starting
point and immediately separates three things: the open evolution before the
inverse QFT, the coherent detuning that shifts the phase factors, and the
classical stochastic readout channel that acts after the quantum measurement.

\section{Minimal space of visible Liouvillian modes}

We first formulate the construction for an arbitrary finite-dimensional
protocol. Let $\{E_k\}_{k=1}^m$ be a POVM in $\cH$, let
$\cE_t=e^{t\cL}$ be a quantum Markov semigroup, and let
$\cE_t^\dagger=e^{tG}$ be its adjoint evolution in the Heisenberg picture,
where $G:=\cL^\dagger$ is the generator of observables. The notation $G$ is
used in all later sections. The space of operators $\mathcal B(\cH)$ is
considered over $\mathbb C$ with the Hilbert--Schmidt inner product
$\langle X,Y\rangle_{\rm HS}=\Tr(X^\dagger Y)$. All restrictions of generators
and orthogonal projections below refer to this finite-dimensional structure.

\begin{definition}[Space of visible observables]
The minimal invariant space visible to the POVM
$E=(E_1,\ldots,E_m)$ is
\begin{equation}
 \cK_E
 :=\Span_{\mathbb C}
 \bigl\{G^r(E_k):k=1,\ldots,m,\ r\ge0\bigr\}.
 \label{eq:visible-space-ru}
\end{equation}
For QPE with the effects \eqref{eq:qpe-effects-ru} the notation
$\cK_{\rm QPE}$ is used.
\end{definition}

\begin{theorem}[Exact reduction to the visible space]
\label{thm:visible-reduction-ru}
For any initial state $\rho_0$ each probability
\begin{equation}
 p_k(t)=\Tr\!\left[\rho_0e^{tG}(E_k)\right]
 \label{eq:heisenberg-probability-ru}
\end{equation}
is completely determined by the restriction
\begin{equation}
 A_E:=G|_{\cK_E}.
\end{equation}
Moreover, $p_k(t)$ is a finite linear combination of the functions
\begin{equation}
 t^re^{\lambda t},
 \qquad \lambda\in\sigma(A_E),
 \qquad 0\le r<\nu(\lambda),
\end{equation}
where $\nu(\lambda)$ is the size of the largest Jordan block for $\lambda$.
\end{theorem}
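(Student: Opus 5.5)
The plan has three steps: show that $\cK_E$ is $G$-invariant, expand the exponential inside it, and then read off the time dependence from the Jordan form.

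\emph{Invariance.} Since $\cB(\cH)$ is finite-dimensional, $\cK_E$ is a finite-dimensional subspace. It is $G$-invariant by construction: $G$ maps each spanning element $G^r(E_k)$ to $G^{r+1}(E_k)$, which is again in the spanning set. By Cayley--Hamilton for the operator $G$ on $\cB(\cH)$, finitely many powers already suffice, so $\cK_E=\Span\{G^r(E_k):0\le r<d,\ 1\le k\le m\}$ with $d=\dim\cB(\cH)$. This makes $A_E:=G|_{\cK_E}$ a well-defined linear operator on $\cK_E$.

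\emph{Exponential series.} Apply the power series $e^{tG}(E_k)=\sum_{r\ge0}\frac{t^r}{r!}G^r(E_k)$, which converges absolutely in finite dimensions. Every partial sum lies in $\cK_E$, and $\cK_E$ is closed, so the limit does too. On $\cK_E$ we have $G^r(E_k)=A_E^r(E_k)$, hence $e^{tG}(E_k)=e^{tA_E}(E_k)$. Consequently
\[
p_k(t)=\Tr\!\left[\rho_0e^{tA_E}(E_k)\right]=\bigl\langle \rho_0^\dagger,\, e^{tA_E}E_k\bigr\rangle_{\rm HS}.
\]
This expression depends on $G$ only through $A_E$, together with the fixed data $\rho_0$ and $E_k$. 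Equivalently, $\rho_0$ enters only through the linear functional $X\mapsto\Tr(\rho_0X)$ restricted to $\cK_E$, that is, through the orthogonal projection of $\rho_0$ onto $\cK_E$. This proves the first claim.

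\emph{Jordan form.} Write $A_E$ in Jordan form on $\cK_E$: $A_E=\sum_{\lambda\in\sigma(A_E)}(\lambda P_\lambda+N_\lambda)$, where the $P_\lambda$ are the spectral projections and $N_\lambda$ is nilpotent with $N_\lambda^{\nu(\lambda)}=0$ and $N_\lambda P_\lambda=P_\lambda N_\lambda=N_\lambda$. Because $\lambda P_\lambda$ and $N_\lambda$ commute, we get
\[
e^{tA_E}=\sum_\lambda e^{\lambda t}\sum_{r=0}^{\nu(\lambda)-1}\frac{t^r}{r!}N_\lambda^rP_\lambda.
\]
Inserting this into the formula for $p_k(t)$ gives $p_k(t)=\sum_\lambda\sum_{r<\nu(\lambda)}c_{k,\lambda,r}\,t^re^{\lambda t}$ with $c_{k,\lambda,r}=\frac{1}{r!}\Tr[\rho_0N_\lambda^rP_\lambda(E_k)]$, which is the claimed finite expansion.

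I do not expect a real obstacle. The only point needing care is that the exponent range must be governed by the Jordan structure of the restriction $A_E$, not of the full $G$. This is handled by doing the Jordan decomposition inside $\cK_E$ rather than on all of $\cB(\cH)$. Since $\sigma(A_E)\subseteq\sigma(G)$ and the Jordan blocks of $A_E$ are no larger than those of $G$, the statement is consistent with the global spectrum.
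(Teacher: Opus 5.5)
Your proposal is correct and follows the paper's proof essentially step for step. You use $G$-invariance of $\cK_E$ by construction, then the identity $G^rE_k=A_E^rE_k$ to get $e^{tG}E_k=e^{tA_E}E_k$ from the power series, and finally the Jordan structure of $A_E$; your spectral-projection/nilpotent form of that structure is equivalent to the paper's Jordan basis, and the Cayley--Hamilton and restriction-versus-full-$G$ remarks are sound but do not change the argument.
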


\begin{proof}
By construction $G(\cK_E)\subseteq\cK_E$: the image of each generating element
$G^rE_k$ is the next generating element. In finite dimension the sequence of
Krylov spans stabilizes, so the restriction $A_E$ is a finite-dimensional
operator. From $E_k\in\cK_E$ it follows that
\[
 G^rE_k=A_E^rE_k\qquad (r\ge0).
\]
Comparing the power series of the exponentials gives
$e^{tG}E_k=e^{tA_E}E_k$. Applying the linear functional
$X\mapsto\Tr(\rho_0X)$ proves the first statement.

Choose a Jordan basis for $A_E$. The exponential of a block
$J_\lambda=\lambda I+N$ of size $s$ has the form
\[
 e^{tJ_\lambda}=e^{\lambda t}
 \sum_{r=0}^{s-1}\frac{t^rN^r}{r!}.
\]
After applying the linear functional one is left with finite linear
combinations of $t^re^{\lambda t}$ with $r<\nu(\lambda)$, which proves the
second statement.
\end{proof}

The following standard consequence of observability is given with a proof in
order to specify the class of models being compared
\cite{Jamiolkowski1983,Grigoletto2025}.
\begin{proposition}[Universal minimality]
\label{thm:universal-minimality-ru}
Let $\mathcal W\subseteq\mathcal B(\cH)$ be a linear space with the following
property: for each outcome $k$ and time $t\ge0$ there is an operator
$Y_k(t)\in\mathcal W$ such that
\begin{equation}
 \Tr\!\left[\rho Y_k(t)\right]
 =\Tr\!\left[\rho e^{tG}(E_k)\right]
 \label{eq:universal-representation-ru}
\end{equation}
for all density matrices $\rho$. Then $\cK_E\subseteq\mathcal W$.
Consequently, $\cK_E$ is the unique smallest linear carrier of an exact
representation of all outcome probabilities that is uniform in the initial
states and in time.
\end{proposition}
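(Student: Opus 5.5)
The plan is to show that each generating element $G^r(E_k)$ of $\cK_E$ lies in $\mathcal W$. Everything reduces to two facts: density matrices separate operators, and $\mathcal W$, being a finite-dimensional subspace, is closed and hence stable under differentiation in $t$.

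\emph{Step 1: identify $Y_k(t)$.} Density matrices span $\mathcal B(\cH)$ over $\mathbb C$: every Hermitian operator is a real combination of rank-one projectors, and every operator is $H_1+iH_2$ with $H_1,H_2$ Hermitian. Hence if $\Tr(\rho X)=0$ for all density matrices $\rho$, then $\Tr(ZX)=0$ for all $Z$, and taking $Z=X^\dagger$ gives $X=0$. Applying this to \eqref{eq:universal-representation-ru} yields $Y_k(t)=e^{tG}(E_k)$ exactly. In particular $e^{tG}(E_k)\in\mathcal W$ for every $t\ge0$.

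\emph{Step 2: differentiate at $t=0$.} Since $\mathcal W$ is a finite-dimensional subspace of $\mathcal B(\cH)$, it is closed. The map $t\mapsto e^{tG}(E_k)$ is smooth and takes values in $\mathcal W$. Therefore each one-sided derivative at $t=0$ is a limit of difference quotients in $\mathcal W$ and belongs to $\mathcal W$. This gives $G^r(E_k)\in\mathcal W$ for all $r\ge0$. Inductively, the $r$-th derivative at $0^+$ is a limit of finite differences of values $e^{hG}(E_k)$, each in $\mathcal W$.

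As an alternative that avoids derivatives, pick distinct times $t_0,\dots,t_N$, where $N$ is the degree of the minimal polynomial of $G$ restricted to the cyclic span of $E_k$. Every element $e^{tG}(E_k)$ is a combination of $G^r(E_k)$ with $0\le r<N$, with coefficients that are exponential polynomials. A generalized-Vandermonde invertibility argument would then recover each $G^r(E_k)$. The derivative route is simpler, so I will use it.

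\emph{Step 3: conclude.} Taking spans gives $\cK_E\subseteq\mathcal W$. By Theorem~\ref{thm:visible-reduction-ru}, $\cK_E$ itself has the property required of $\mathcal W$, with $Y_k(t)=e^{tA_E}E_k\in\cK_E$. Hence $\cK_E$ is contained in every such carrier and is itself one, which makes it the unique smallest. The only point needing care is the separation argument in Step 1, namely that the density matrices span all of $\mathcal B(\cH)$ over $\mathbb C$. Once $Y_k(t)=e^{tG}(E_k)$ is fixed, the rest is routine.
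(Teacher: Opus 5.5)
Your proposal is correct and follows the paper's proof. Density matrices separate operators, which forces $Y_k(t)=e^{tG}(E_k)\in\mathcal W$; the finite-dimensional $\mathcal W$ is closed, so differentiating at $t=0$ gives $G^r(E_k)\in\mathcal W$; and Theorem~\ref{thm:visible-reduction-ru} supplies the converse. Your explicit spanning argument and the one-sided finite-difference justification just fill in steps the paper states briefly.
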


\begin{proof}
Density matrices linearly span the space of operators and separate its
elements with respect to the trace pairing. Therefore
\eqref{eq:universal-representation-ru} implies
\[
 Y_k(t)=e^{tG}(E_k)\in\mathcal W.
\]
In finite dimension the space $\mathcal W$ is closed. Successive
differentiation of the trajectory at $t=0$ gives
\[
 G^r(E_k)
 =\left.\frac{d^r}{dt^r}e^{tG}(E_k)\right|_{t=0}
 \in\mathcal W,
 \qquad r\ge0.
\]
Hence all generating elements of \eqref{eq:visible-space-ru} belong to
$\mathcal W$. The converse sufficiency follows from
Theorem~\ref{thm:visible-reduction-ru}.
\end{proof}

\begin{remark}[Restricted family of initial states]
\label{rem:restricted-state-family-ru}
The universal minimality in Proposition~\ref{thm:universal-minimality-ru}
refers to the whole set of density matrices. For a given family $\mathfrak S$
the statistics determine the observables only up to the annihilator
\[
 \mathfrak S^\perp
 :=\{X:\Tr(\rho X)=0\ \text{for all }\rho\in\mathfrak S\}.
\]
The quotient $\cK_E/(\cK_E\cap\mathfrak S^\perp)$ identifies observables
with identical expectations on $\mathfrak S$. All dimension formulas in this work refer
to the universal model that preserves the statistics for an arbitrary initial
state.
\end{remark}

\begin{remark}[Linear observable model]
\label{rem:linear-versus-physical-reduction-ru}
The restriction $A_E$ is an exact linear model of the evolution of the chosen
observables. All dimensions $\dim\cK_E$ below refer to this linear
observable model. The explicit construction \eqref{eq:brownian-gkls-ru}
specifies a GKLS generator on the counting register and gives an exact
realization of the computed observable dynamics.
\end{remark}

\subsection{Finite-time Gramian and effective visibility}

The exact dimension of the visible space is sensitive to the splitting of
multiple eigenvalues. To describe the strength of the visible modes
quantitatively we fix a finite time horizon $T>0$, set
\begin{equation}
 x_k(t):=e^{tG}E_k,
 \label{eq:visible-trajectories-ru}
\end{equation}
and define a positive operator $\mathfrak G_T$ in the Hilbert space
$\mathcal B(\cH)$ by
\begin{equation}
 \mathfrak G_T X
 :=\sum_{k=1}^{m}\int_0^T
 x_k(t)\,\langle x_k(t),X\rangle_{\rm HS}\,dt.
 \label{eq:finite-time-gramian-ru}
\end{equation}
Formula \eqref{eq:finite-time-gramian-ru} is a finite-time, measurement-dependent
analogue of the observability Gramian \cite{Moore1981,Antoulas2005}: it
contains the complete set of effects of the chosen POVM. The following result
links its rank with universal minimality and gives the optimal truncation
directly for the probability trajectories.

\begin{theorem}[QPE-visibility Gramian]
\label{thm:finite-time-gramian-ru}
Let
$\sigma_1(T)\ge\cdots\ge\sigma_D(T)\ge0$ be the eigenvalues of
$\mathfrak G_T$, where $D=\dim\mathcal B(\cH)$. Then (i)~$\mathfrak G_T$ is
self-adjoint and positive, and
\begin{equation}
  \operatorname{Ran}\mathfrak G_T=\cK_E,
  \qquad
  \operatorname{rank}\mathfrak G_T=\dim\cK_E;
  \label{eq:gramian-range-ru}
\end{equation}
(ii)~among all orthogonal projectors $P$ of rank $q$ the minimum of the
functional
\begin{equation}
  \mathcal J_T(P)
  :=\sum_{k=1}^{m}\int_0^T
  \|(I-P)x_k(t)\|_{\rm HS}^2\,dt
  \label{eq:gramian-projection-cost-ru}
\end{equation}
equals $\sum_{j>q}\sigma_j(T)$ and is attained at the spectral projector
$P_q$ onto the span of $q$ orthonormal eigenvectors corresponding to
$\sigma_1(T),\ldots,\sigma_q(T)$; and (iii)~if
\begin{equation}
  \widehat p_{k,q}(t)
  :=\Tr\!\left[\rho_0P_qx_k(t)\right],
  \label{eq:gramian-projected-probability-ru}
\end{equation}
then, with the convention $\sigma_{D+1}(T)=0$,
\begin{equation}
  \sum_{k=1}^{m}\int_0^T
  |p_k(t)-\widehat p_{k,q}(t)|^2\,dt
  \le \sigma_{q+1}(T)\|\rho_0\|_{\rm HS}^2.
  \label{eq:gramian-probability-bound-ru}
\end{equation}
\end{theorem}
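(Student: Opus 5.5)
The plan is to view $\mathfrak G_T$ as $\mathcal A\mathcal A^\ast$, where $\mathcal A$ is the synthesis map. Take $L^2([0,T];\mathbb C^m)$ with the obvious inner product and define
\[
\mathcal A:L^2([0,T];\mathbb C^m)\to\mathcal B(\cH),\qquad \mathcal A f:=\sum_{k=1}^m\int_0^T f_k(t)\,x_k(t)\,dt .
\]
Its adjoint is $(\mathcal A^\ast X)_k(t)=\langle x_k(t),X\rangle_{\rm HS}$. Since $t\mapsto x_k(t)$ is continuous, both operators are bounded, and a direct substitution gives $\mathfrak G_T=\mathcal A\mathcal A^\ast$.

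For part (i), self-adjointness and positivity follow at once from $\langle X,\mathfrak G_TX\rangle=\sum_k\int_0^T|\langle x_k(t),X\rangle|^2dt\ge0$. By finite-dimensional linear algebra, $\operatorname{Ran}\mathfrak G_T=(\ker\mathfrak G_T)^\perp$, and $\ker\mathfrak G_T=\{X:\langle x_k(t),X\rangle=0\ \forall k,\ t\in[0,T]\}$.

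It then suffices to show that the span $\mathcal S_T$ of $\{x_k(t):t\in[0,T]\}$ equals $\cK_E$. One inclusion is immediate: $x_k(t)=e^{tA_E}E_k\in\cK_E$ by Theorem~\ref{thm:visible-reduction-ru}. For the other, $\mathcal S_T$ is a closed finite-dimensional space, so the derivatives $G^rE_k=\frac{d^r}{dt^r}x_k(t)|_{t=0^+}$, taken as one-sided derivatives at $t=0$, lie in $\mathcal S_T$. This is the same argument as in Proposition~\ref{thm:universal-minimality-ru}, adapted to one-sided limits on $[0,T]$ with $T>0$. Equivalently, by analyticity, if $X\perp x_k(t)$ on $[0,T]$ then all derivatives of $\langle x_k(t),X\rangle$ vanish at $0$, so $X\perp G^rE_k$ for every $r$. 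This gives $\operatorname{Ran}\mathfrak G_T=\cK_E$, and the rank statement follows.

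For part (ii), expand $\mathcal J_T(P)=\sum_k\int_0^T\|x_k\|^2dt-\sum_k\int_0^T\|Px_k\|^2dt$. The first term equals $\Tr\mathfrak G_T=\sum_j\sigma_j$. For $P=\sum_{i\le q}\proj{u_i}$, the second term is $\sum_{i\le q}\langle u_i,\mathfrak G_Tu_i\rangle=\Tr(P\mathfrak G_T)$. Ky Fan's maximum principle gives $\max_{\operatorname{rank}P=q}\Tr(P\mathfrak G_T)=\sum_{j\le q}\sigma_j$, attained at the spectral projector $P_q$. Hence the minimum of $\mathcal J_T$ is $\sum_{j>q}\sigma_j$.

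For part (iii), write $p_k-\widehat p_{k,q}=\Tr[\rho_0(I-P_q)x_k(t)]=\langle Y,\,(I-P_q)x_k(t)\rangle_{\rm HS}$ with $Y=(I-P_q)\rho_0^\dagger$, using self-adjointness of $I-P_q$ in the HS inner product. Summing and integrating then gives
\[
\sum_k\int_0^T|\langle Y,x_k(t)\rangle|^2dt=\langle Y,\mathfrak G_TY\rangle .
\]
Since $Y$ lies in the range of $I-P_q$, the spectral subspace for $\sigma_{q+1},\dots,\sigma_D$, this quantity is at most $\sigma_{q+1}\|Y\|^2\le\sigma_{q+1}\|\rho_0\|_{\rm HS}^2$. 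The convention $\sigma_{D+1}=0$ handles $q=D$, where $P_q=I$.

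The main delicate point is in (i): justifying that derivatives at the endpoint $t=0$ stay in the span of the trajectory restricted to the finite window $[0,T]$. I would handle it via the analyticity of $t\mapsto\langle x_k(t),X\rangle$, which avoids any limit-in-subspace subtlety. A secondary point is the identification $\Tr[\rho_0Z]=\langle\rho_0^\dagger,Z\rangle$ and the transfer of $I-P_q$ across the inner product, which is routine.
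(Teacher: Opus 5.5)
Your proposal is correct and follows essentially the same route as the paper: the quadratic-form identity gives positivity and identifies the kernel with the orthogonal complement of the trajectory span, and derivatives at $t=0$ identify that span with $\cK_E$. Parts (ii) and (iii) then follow from $\mathcal J_T(P)=\Tr\mathfrak G_T-\Tr(P\mathfrak G_T)$ with the Ky Fan/Courant--Fischer principle, and from bounding $\langle Y,\mathfrak G_TY\rangle$ on the range of $I-P_q$. Your factorization $\mathfrak G_T=\mathcal A\mathcal A^\ast$ and your annihilator version of the derivative step are only cosmetic variants of the paper's closed-subspace difference-quotient argument; for the latter, smoothness of $t\mapsto\langle x_k(t),X\rangle_{\rm HS}$ already suffices, and analyticity is not needed.
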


\begin{proof}
From \eqref{eq:finite-time-gramian-ru} it follows that
\begin{equation}
 \langle X,\mathfrak G_TX\rangle_{\rm HS}
 =\sum_{k=1}^{m}\int_0^T
 |\langle x_k(t),X\rangle_{\rm HS}|^2\,dt,
 \label{eq:gramian-quadratic-form-ru}
\end{equation}
which proves self-adjointness and positivity. Denote by
$\mathcal V_T$ the linear span of all $x_k(t)$ for $0\le t\le T$.
Each trajectory belongs to $\cK_E$, hence
$\mathcal V_T\subseteq\cK_E$. The space $\mathcal V_T$ is finite-dimensional
and closed. Difference quotients of the trajectories at $t=0$ and their
successive derivatives show that $G^rE_k\in\mathcal V_T$ for all
$r\ge0$. Consequently $\mathcal V_T=\cK_E$.

Equality \eqref{eq:gramian-quadratic-form-ru} gives
$\ker\mathfrak G_T=\mathcal V_T^\perp$. For a self-adjoint operator
$\operatorname{Ran}\mathfrak G_T=(\ker\mathfrak G_T)^\perp$, whence
\eqref{eq:gramian-range-ru} follows.

For an orthogonal projector $P$ we have
\[
 \mathcal J_T(P)
 =\operatorname{Tr}(\mathfrak G_T)
  -\operatorname{Tr}(P\mathfrak G_T).
\]
The Courant--Fischer principle gives the maximum
$\operatorname{Tr}(P\mathfrak G_T)=\sum_{j=1}^q\sigma_j(T)$ at $P_q$.
This proves part (ii). Finally,
\begin{align*}
 &\sum_k\int_0^T
 |\langle\rho_0,(I-P_q)x_k(t)\rangle_{\rm HS}|^2\,dt\\
 &\qquad=
 \langle\rho_0,(I-P_q)\mathfrak G_T(I-P_q)\rho_0\rangle_{\rm HS}\\
 &\qquad\le\sigma_{q+1}(T)\|\rho_0\|_{\rm HS}^2,
\end{align*}
because $P_q$ is a spectral projector of $\mathfrak G_T$.
\end{proof}

The spectrum of $\mathfrak G_T$ defines a stable visibility scale. For example,
the relative effective dimension can be defined as
\begin{equation}
 d_{\rm eff}(T,\varepsilon)
 :=\min\left\{q:\
 \sum_{j>q}\sigma_j(T)
 \le\varepsilon^2\sum_{j\ge1}\sigma_j(T)\right\}.
 \label{eq:effective-visible-dimension-ru}
\end{equation}
For $\varepsilon=0$ it coincides with $\dim\cK_E$, while for positive
$\varepsilon$ it accounts for the weight of the modes on the given time
horizon.

\begin{corollary}[Quadratic weight of new visible modes]
\label{cor:quadratic-gramian-tail-ru}
Let $G_\eta=G_0+\eta V$, $|\eta|\le\eta_0$, and let $\cK_0$ be the exact
visible space for $G_0$. Set $d_0:=\dim\cK_0$ and
\[
 \beta:=\sup_{|\xi|\le\eta_0}\mu_2(G_\xi).
\]
If $\sigma_j^{(\eta)}(T)$ are the eigenvalues of the Gramian for
$G_\eta$, then
\begin{equation}
 \sum_{j>d_0}\sigma_j^{(\eta)}(T)
 \le
 \eta^2\|V\|_{2\to2}^2
 \left(\sum_{k=1}^{m}\|E_k\|_{\rm HS}^2\right)
 \int_0^T t^2e^{2\beta t}\,dt.
 \label{eq:quadratic-gramian-tail-ru}
\end{equation}
For QPE the sum of the norms of the effects on the right-hand side equals $M$.
\end{corollary}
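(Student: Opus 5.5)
The plan is to combine the variational characterization in Theorem~\ref{thm:finite-time-gramian-ru}(ii) with a Duhamel comparison between the perturbed and unperturbed trajectories. Let $P_0$ be the orthogonal projector onto $\cK_0$, so $\operatorname{rank}P_0=d_0$. By part~(ii), the tail $\sum_{j>d_0}\sigma_j^{(\eta)}(T)$ is the minimum of $\mathcal J_T$ over projectors of rank $d_0$. It is therefore bounded by
\[
 \mathcal J_T^{(\eta)}(P_0)=\sum_{k=1}^m\int_0^T\|(I-P_0)x_k^{(\eta)}(t)\|_{\rm HS}^2\,dt ,
\]
where $x_k^{(\eta)}(t)=e^{tG_\eta}E_k$. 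It then suffices to bound $\|(I-P_0)x_k^{(\eta)}(t)\|_{\rm HS}\le |\eta|\,\|V\|_{2\to2}\,t\,e^{\beta t}\|E_k\|_{\rm HS}$ pointwise in $t$. Squaring, summing over $k$ and integrating then gives \eqref{eq:quadratic-gramian-tail-ru}.

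For the pointwise bound, Theorem~\ref{thm:visible-reduction-ru} gives $x_k^{(0)}(t)=e^{tG_0}E_k\in\cK_0$, so $(I-P_0)x_k^{(\eta)}(t)=(I-P_0)\bigl(x_k^{(\eta)}(t)-x_k^{(0)}(t)\bigr)$. Since $\|I-P_0\|\le1$, only the difference of the two propagators needs to be controlled. Duhamel's formula gives
\[
 e^{tG_\eta}-e^{tG_0}=\eta\int_0^t e^{(t-s)G_\eta}\,V\,e^{sG_0}\,ds .
\]
We read $\mu_2$ as the logarithmic norm with respect to the Hilbert--Schmidt norm, which is the $2\to2$ operator norm on $\mathcal B(\cH)$. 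With that reading, $\|e^{\tau G_\xi}\|_{2\to2}\le e^{\tau\mu_2(G_\xi)}\le e^{\beta\tau}$ for both $\xi=\eta$ and $\xi=0$; both values satisfy $|\xi|\le\eta_0$. The integrand is then bounded by $e^{\beta(t-s)}\|V\|e^{\beta s}\|E_k\|_{\rm HS}=e^{\beta t}\|V\|\|E_k\|_{\rm HS}$, and integrating over $s\in[0,t]$ produces the factor $t$. If $\beta<0$ the same computation still works; only the convention that the sup in $\beta$ includes $\xi=0$ is used.

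It remains to check the final claim. For QPE each $E_k$ is a rank-one orthogonal projector, so $\|E_k\|_{\rm HS}^2=\Tr E_k^2=\Tr E_k=1$, and the sum over the $M$ effects equals $M$.

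The step that needs the most care is matching the norm in $\mu_2$ and $\|V\|_{2\to2}$ to the Hilbert--Schmidt geometry in which the Gramian and $\mathcal J_T$ are defined. The exponential bound must hold in the same norm that is used in $\mathcal J_T$, so I would state that interpretation explicitly. A second point to handle explicitly is that $P_0$ is generally not a spectral projector of $\mathfrak G_T^{(\eta)}$. This is harmless, because part~(ii) is a minimum over all rank-$d_0$ projectors, so any admissible $P_0$ gives an upper bound.
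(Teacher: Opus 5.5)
Your proposal is correct and follows essentially the same route as the paper: you use $P_0$ as a rank-$d_0$ trial projector in the variational bound of Theorem~\ref{thm:finite-time-gramian-ru}(ii), then make the Duhamel comparison with $e^{tG_0}E_k\in\cK_0$ and the logarithmic-norm bound $e^{\beta(t-s)}e^{\beta s}=e^{\beta t}$. Your explicit remarks that $\xi=0$ lies in the range of the supremum defining $\beta$, and that $\mu_2$ and $\|V\|_{2\to2}$ are taken in the Hilbert--Schmidt geometry, only make explicit what the paper's proof uses implicitly.
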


\begin{proof}
Let $P_0$ be the orthogonal projection onto $\cK_0$. By the optimality in
Theorem~\ref{thm:finite-time-gramian-ru}
\begin{equation}
 \sum_{j>d_0}\sigma_j^{(\eta)}(T)
 \le\sum_k\int_0^T
 \|(I-P_0)e^{tG_\eta}E_k\|_{\rm HS}^2\,dt.
 \label{eq:gramian-tail-trial-space-ru}
\end{equation}
Since $e^{tG_0}E_k\in\cK_0$, the Duhamel formula gives
\begin{align*}
 \|(I-P_0)e^{tG_\eta}E_k\|_{\rm HS}
 &\le\|(e^{tG_\eta}-e^{tG_0})E_k\|_{\rm HS}\\
 &\le |\eta|\int_0^t
 \|e^{(t-s)G_\eta}\|_{2\to2}\|V\|_{2\to2}
 \|e^{sG_0}E_k\|_{\rm HS}\,ds\\
 &\le |\eta|t e^{\beta t}\|V\|_{2\to2}\|E_k\|_{\rm HS}.
\end{align*}
Squaring, summing over $k$ and integrating over time in
\eqref{eq:gramian-tail-trial-space-ru} proves
\eqref{eq:quadratic-gramian-tail-ru}.
\end{proof}

For computational reduction one often uses a trial space obtained from a
simplified generator or from a truncated Krylov procedure. The following result
gives an accuracy estimate that can be verified after the construction.

\begin{theorem}[A posteriori stability of approximate reduction]
\label{thm:residual-stability-ru}
Let the space $\cK_0$ contain all effects $E_k$, and let
$P$ be the orthogonal projection onto $\cK_0$ in the
Hilbert--Schmidt metric. Set
\begin{equation}
 A:=PGP|_{\cK_0},
 \qquad
 R:=(I-P)GP|_{\cK_0}.
 \label{eq:residual-operators-ru}
\end{equation}
For the approximate probability
\begin{equation}
 \widetilde p_k(t):=\Tr\!\left[\rho_0e^{tA}(E_k)\right]
 \label{eq:reduced-probability-ru}
\end{equation}
the estimate
\begin{equation}
 |p_k(t)-\widetilde p_k(t)|
 \le
 \norm{\rho_0}_{\rm HS}\norm{E_k}_{\rm HS}
 \norm{R}_{2\to2}\,
 \Psi_t\!\left(\mu_2(G),\mu_2(A)\right),
 \label{eq:residual-bound-ru}
\end{equation}
holds, where
\begin{equation}
 \mu_2(B):=\lambda_{\max}\!\left(\frac{B+B^\dagger}{2}\right),
 \qquad
 \Psi_t(a,b):=
 \begin{cases}
 \dfrac{e^{at}-e^{bt}}{a-b}, & a\ne b,\\[6pt]
 te^{at}, & a=b.
 \end{cases}
 \label{eq:stability-factor-ru}
\end{equation}
For QPE, $\norm{\rho_0}_{\rm HS}\le1$ and
$\norm{E_k}_{\rm HS}=1$.
\end{theorem}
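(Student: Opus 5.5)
The plan is a Duhamel comparison between the full Heisenberg flow $e^{tG}$ on $\mathcal B(\cH)$ and the reduced flow $e^{tA}$ on $\cK_0$. Extend $A$ to all of $\mathcal B(\cH)$ as $PGP$; on $\cK_0$ it acts as $A$, so $e^{tA}E_k=e^{tPGP}E_k$ because $E_k\in\cK_0=\operatorname{Ran}P$. Set $y(s):=e^{sA}E_k\in\cK_0$. Differentiating $s\mapsto e^{(t-s)G}y(s)$ gives
\[
 \frac{d}{ds}\,e^{(t-s)G}y(s)=e^{(t-s)G}\bigl(A-G\bigr)y(s).
\]
Since $y(s)=Py(s)$, we have $(G-A)y(s)=(G-PGP)Py(s)=(I-P)GPy(s)=Ry(s)$. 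Integrating from $0$ to $t$ yields the error representation
\[
 e^{tG}E_k-e^{tA}E_k=\int_0^t e^{(t-s)G}\,R\,e^{sA}E_k\,ds .
\]

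Pairing with $\rho_0$ and using Cauchy--Schwarz in the Hilbert--Schmidt inner product gives
\[
 |p_k(t)-\widetilde p_k(t)|\le\|\rho_0\|_{\rm HS}\int_0^t\|e^{(t-s)G}\|_{2\to2}\,\|R\|_{2\to2}\,\|e^{sA}\|_{2\to2}\,\|E_k\|_{\rm HS}\,ds.
\]
Here $\|e^{sA}\|$ is taken on $\cK_0$ with the restricted HS norm. I will then invoke the standard logarithmic-norm bound $\|e^{\tau B}\|_{2\to2}\le e^{\tau\mu_2(B)}$. This follows from $\frac{d}{d\tau}\|e^{\tau B}X\|^2=2\,\mathrm{Re}\langle e^{\tau B}X,Be^{\tau B}X\rangle\le 2\mu_2(B)\|e^{\tau B}X\|^2$ together with Grönwall. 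The bound is applied to $G$ on $\mathcal B(\cH)$ and to $A$ on $\cK_0$ with the HS inner product there, where $A^\dagger=PG^\dagger P|_{\cK_0}$.

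It remains to evaluate $\int_0^t e^{(t-s)a}e^{sb}\,ds$ with $a=\mu_2(G)$ and $b=\mu_2(A)$, which is exactly $\Psi_t(a,b)$ in both the cases $a\ne b$ and $a=b$. The QPE remarks are immediate: $E_k$ is a rank-one projector, so $\|E_k\|_{\rm HS}=1$, and $\|\rho_0\|_{\rm HS}^2=\Tr\rho_0^2\le1$.

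The only delicate step is justifying $(G-A)y(s)=Ry(s)$. It relies on the trajectory $e^{sA}E_k$ staying inside $\cK_0$, which holds because $A$ maps $\cK_0$ into itself, and on $E_k\in\cK_0$. The rest is routine.
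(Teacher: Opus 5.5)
Your proposal is correct and follows essentially the same proof as the paper: the Duhamel representation $e^{tG}E_k-e^{tA}E_k=\int_0^t e^{(t-s)G}Re^{sA}E_k\,ds$, the logarithmic-norm/Gr\"onwall bounds for $G$ on $\mathcal B(\cH)$ and for $A$ on $\cK_0$, Cauchy--Schwarz against $\rho_0$, and the evaluation $\int_0^t e^{(t-s)a}e^{sb}\,ds=\Psi_t(a,b)$. The only cosmetic difference is that you obtain the Duhamel formula by differentiating $s\mapsto e^{(t-s)G}y(s)$, whereas the paper writes the error equation $\dot z=Gz+Ry$ for $z=x-y$.
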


\begin{proof}
On $\cK_0$ the decomposition
\[
 GP=PGP+(I-P)GP=A+R,
\]
holds, where the first term takes values in $\cK_0$ and the second in
$\cK_0^\perp$. Denote $x(t)=e^{tG}E_k$, $y(t)=e^{tA}E_k$ and embed
$y(t)$ in the full operator space. For $z=x-y$ we obtain
\[
 \dot z=Gz+Ry,
 \qquad z(0)=0.
\]
The Duhamel formula gives
\[
 z(t)=\int_0^t e^{(t-s)G}R e^{sA}E_k\,ds.
\]
For completeness we derive the semigroup estimate used. If
$u(t)=e^{tB}u_0$, then
\[
 \frac{d}{dt}\norm{u(t)}_2^2
 =2\operatorname{Re}\langle u(t),Bu(t)\rangle
 \le2\mu_2(B)\norm{u(t)}_2^2.
\]
Gr\"onwall's lemma gives
$\norm{e^{tB}}_{2\to2}\le e^{t\mu_2(B)}$. Therefore
\begin{align*}
 \norm{z(t)}_{\rm HS}
 &\le \norm{R}_{2\to2}\norm{E_k}_{\rm HS}
 \int_0^t e^{(t-s)\mu_2(G)}e^{s\mu_2(A)}\,ds\\
 &=\norm{R}_{2\to2}\norm{E_k}_{\rm HS}
 \Psi_t\bigl(\mu_2(G),\mu_2(A)\bigr).
\end{align*}
Since $\rho_0=\rho_0^\dagger$, we have
\[
 p_k(t)-\widetilde p_k(t)
 =\langle\rho_0,z(t)\rangle_{\rm HS}.
\]
The Cauchy--Schwarz inequality gives \eqref{eq:residual-bound-ru}. For a state
$\Tr(\rho_0^2)\le(\Tr\rho_0)^2=1$, and each QPE effect is a rank-one
projector. Consequently,
\[
 \norm{\rho_0}_{\rm HS}\le1,
 \qquad \norm{E_k}_{\rm HS}=\sqrt{\Tr(E_k^2)}=1.
\]
\end{proof}

\begin{corollary}[Small perturbation of the generator]
\label{cor:small-perturbation-ru}
Let $G_\eta=G_0+\eta V$ with $G_0\cK_0\subseteq\cK_0$. Then
\begin{equation}
 R_\eta=\eta(I-P)VP,
 \label{eq:linear-residual-ru}
\end{equation}
and on every bounded interval of values of $\eta$ the right-hand side of
\eqref{eq:residual-bound-ru} is of order $O(|\eta|)$ for fixed
$t$.
\end{corollary}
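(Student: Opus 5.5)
The plan is to instantiate Theorem~\ref{thm:residual-stability-ru} with $G=G_\eta$ and the fixed trial space $\cK_0$, compute the residual explicitly, and then bound the remaining factors uniformly in $\eta$. First I would compute $R_\eta=(I-P)G_\eta P|_{\cK_0}$. Since $G_\eta=G_0+\eta V$, linearity gives
\[
 R_\eta=(I-P)G_0P+\eta(I-P)VP .
\]
The invariance $G_0\cK_0\subseteq\cK_0$ means that $G_0P X\in\cK_0$ for every $X$, so $(I-P)G_0P=0$. This yields \eqref{eq:linear-residual-ru}, and hence
\[
 \norm{R_\eta}_{2\to2}\le|\eta|\,\norm{V}_{2\to2},
\]
using $\norm{I-P}_{2\to2}\le1$ and $\norm{P}_{2\to2}\le1$ for orthogonal projections.

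Next I would treat the stability factor $\Psi_t(\mu_2(G_\eta),\mu_2(A_\eta))$ with $A_\eta=PG_\eta P|_{\cK_0}$. Two properties are needed.
\begin{itemize}
\item \emph{Lipschitz bounds on $\mu_2$.} The map $B\mapsto\mu_2(B)$ satisfies $|\mu_2(B)-\mu_2(B')|\le\norm{B-B'}_{2\to2}$, since it is a largest eigenvalue of the Hermitian part (Weyl's inequality). Consequently $\mu_2(G_\eta)\le\mu_2(G_0)+|\eta|\norm{V}$. Compression to $\cK_0$ does not increase the numerical range, so $\mu_2(A_\eta)\le\mu_2(G_\eta)$.
\item \emph{Monotonicity of $\Psi_t$.} For $t\ge0$ we have $\Psi_t(a,b)=\int_0^te^{(t-s)a+sb}\,ds$, which is nondecreasing in both arguments. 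Therefore $\Psi_t(\mu_2(G_\eta),\mu_2(A_\eta))\le te^{ct}$ with $c:=\mu_2(G_0)+\eta_0\norm{V}_{2\to2}$ on $|\eta|\le\eta_0$.
\end{itemize}

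Combining these with $\norm{\rho_0}_{\rm HS}\norm{E_k}_{\rm HS}$ gives a bound $|\eta|\,\norm{V}\,\norm{\rho_0}_{\rm HS}\norm{E_k}_{\rm HS}\,te^{ct}=O(|\eta|)$ for fixed $t$, uniformly on the bounded $\eta$-interval.

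The only subtle step is the uniformity of the stability factor. The integral representation of $\Psi_t$ avoids the apparent singularity at $a=b$. That representation, together with monotonicity, is the point I would make explicit. Everything else is the direct specialization of Theorem~\ref{thm:residual-stability-ru}.
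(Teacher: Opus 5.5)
Your proposal is correct and follows the same route as the paper: invariance gives $(I-P)G_0P=0$, hence $R_\eta=\eta(I-P)VP$, and the stability factor is then bounded uniformly on the bounded $\eta$-interval. The paper gets this bound from continuity of the logarithmic norms in $\eta$ plus compactness; you make it quantitative instead, via Weyl's inequality, the compression bound $\mu_2(A_\eta)\le\mu_2(G_\eta)$ and monotonicity of the integral form of $\Psi_t$, which gives the explicit constant $te^{ct}$ and avoids checking continuity of $\Psi_t$ across $a=b$.
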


\begin{proof}
Invariance gives $(I-P)G_0P=0$. The logarithmic norms of finite-dimensional
operators depend continuously on $\eta$, so the factor
\eqref{eq:stability-factor-ru} is bounded on a compact interval.
\end{proof}

\begin{proposition}[Quadratic estimate for a reducing subspace]
\label{prop:quadratic-residual-ru}
Let $P$ be the orthogonal projection onto $\cK_0$,
$G_\eta=G_0+\eta V$, $G_0P=PG_0$, and let
$P\rho_0=\rho_0$ and $PE_k=E_k$. For a fixed $\eta_0>0$ set
\[
 A_\xi:=P(G_0+\xi V)P\big|_{\cK_0},
 \quad
 \beta_G:=\sup_{|\xi|\le\eta_0}\mu_2(G_0+\xi V),
 \quad
 \beta_A:=\sup_{|\xi|\le\eta_0}\mu_2(A_\xi).
\]
Then for $|\eta|\le\eta_0$ and $t\ge0$
\begin{align}
 &\left|\Tr\!\left[\rho_0e^{tG_\eta}(E_k)\right]
 -\Tr\!\left[\rho_0e^{tA_\eta}(E_k)\right]\right|
 \nonumber\\
 &\qquad\le
 \frac{\eta^2t^2}{2}\norm{\rho_0}_{\rm HS}\norm{E_k}_{\rm HS}
 \left(
 \norm{V}_{2\to2}^2e^{t\beta_G}
 +\norm{PVP}_{2\to2}^2e^{t\beta_A}
 \right).
 \label{eq:quadratic-residual-bound-ru}
\end{align}
\end{proposition}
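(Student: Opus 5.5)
We will prove \eqref{eq:quadratic-residual-bound-ru} by a second-order Taylor expansion in the coupling parameter, with no residual/Duhamel argument of the kind used for Theorem~\ref{thm:residual-stability-ru}. For fixed $t$ and $k$ define the scalar function
\[
 f(\xi):=\Tr\!\left[\rho_0e^{tG_\xi}(E_k)\right]-\Tr\!\left[\rho_0e^{tA_\xi}(E_k)\right],\qquad |\xi|\le\eta_0 .
\]
Here $e^{tA_\xi}E_k$ is embedded in $\mathcal B(\cH)$. The function $f$ is smooth, because both exponentials are entire in $\xi$. By Taylor's formula with Lagrange remainder (applied to real and imaginary parts), $|f(\eta)|\le |f(0)|+|\eta||f'(0)|+\frac{\eta^2}{2}\sup_{|\xi|\le|\eta|}|f''(\xi)|$. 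The plan is to show $f(0)=f'(0)=0$ and then to bound $f''$.

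\emph{Step 1: $f(0)=0$.} Since $G_0P=PG_0$, we have $A_0=PG_0P|_{\cK_0}=G_0|_{\cK_0}$. Together with $E_k=PE_k\in\cK_0$ this gives $e^{tA_0}E_k=e^{tG_0}E_k$.

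\emph{Step 2: $f'(0)=0$.} Differentiating the exponential gives
\[
 \tfrac{d}{d\xi}e^{tG_\xi}\big|_{0}=\int_0^t e^{(t-s)G_0}Ve^{sG_0}\,ds
 \quad\text{and}\quad
 \tfrac{d}{d\xi}e^{tA_\xi}\big|_{0}=\int_0^t e^{(t-s)A_0}PVPe^{sA_0}\,ds .
\]
We then use three facts. First, $P$ is a self-adjoint projection with $P\rho_0=\rho_0$, so $\langle\rho_0,X\rangle_{\rm HS}=\langle\rho_0,PX\rangle_{\rm HS}$. Second, $Pe^{(t-s)G_0}=e^{(t-s)G_0}P$. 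Third, $e^{sG_0}E_k=Pe^{sG_0}E_k$. Combining them,
\[
 \langle\rho_0,e^{(t-s)G_0}Ve^{sG_0}E_k\rangle
 =\langle\rho_0,e^{(t-s)G_0}PVPe^{sG_0}E_k\rangle .
\]
On $\cK_0$ the operator $e^{\cdot G_0}$ coincides with $e^{\cdot A_0}$, so the two first-order terms are equal. This cancellation is the key step and the only place where the reducing property and the compatibility assumptions $P\rho_0=\rho_0$, $PE_k=E_k$ enter. I expect it to be the main point to verify carefully, since without it only the linear order of Corollary~\ref{cor:small-perturbation-ru} survives.

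\emph{Step 3: bound on $f''$.} For any $\xi$, the second derivative of $e^{tG_\xi}$ equals
\[
 2\int_{s_1+s_2+s_3=t,\ s_i\ge0}e^{s_1G_\xi}Ve^{s_2G_\xi}Ve^{s_3G_\xi}.
\]
Using $\norm{e^{sG_\xi}}_{2\to2}\le e^{s\mu_2(G_\xi)}\le e^{s\beta_G}$, as derived in the proof of Theorem~\ref{thm:residual-stability-ru}, the product of the three exponential factors is at most $e^{t\beta_G}$, whatever the sign of $\beta_G$. The simplex has volume $t^2/2$, so the operator norm of this second derivative is at most $t^2\norm{V}_{2\to2}^2e^{t\beta_G}$.

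The same argument applies to $A_\xi=A_0+\xi\,PVP|_{\cK_0}$, with $\mu_2(A_\xi)\le\beta_A$ and $\norm{PVP|_{\cK_0}}\le\norm{PVP}_{2\to2}$. It gives the bound $t^2\norm{PVP}_{2\to2}^2e^{t\beta_A}$. By Cauchy--Schwarz,
\[
 |f''(\xi)|\le t^2\norm{\rho_0}_{\rm HS}\norm{E_k}_{\rm HS}\bigl(\norm{V}_{2\to2}^2e^{t\beta_G}+\norm{PVP}_{2\to2}^2e^{t\beta_A}\bigr).
\]
Inserting this into the Taylor remainder $\frac{\eta^2}{2}\sup|f''|$ yields exactly \eqref{eq:quadratic-residual-bound-ru}.
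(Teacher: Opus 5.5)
Your proof is correct and is essentially the paper's argument. It uses the same ingredients: $f(0)=0$ from the reducing property, $f'(0)=0$ from the first-order Duhamel formula together with $P\rho_0=\rho_0$, $PE_k=E_k$, and the second-order Duhamel formula with simplex volume $t^2/2$. Projecting the $G$-derivative onto $\cK_0$, as you do, is equivalent to the paper's observation that the difference of the two derivatives lies in $\cK_0^\perp$.

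The only slip is in the remainder step. For general $V$ the function $f$ may be complex, and applying Lagrange's form separately to $\operatorname{Re}f$ and $\operatorname{Im}f$ loses a factor $\sqrt2$. Use instead the integral remainder $f(\eta)=\int_0^\eta(\eta-\xi)f''(\xi)\,d\xi$, as the paper does, or apply Lagrange to $\operatorname{Re}\bigl(e^{-i\theta}f\bigr)$ with $\theta=\arg f(\eta)$.
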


\begin{proof}
Set
\[
 f(\eta):=
 \left\langle\rho_0,
 \bigl(e^{tG_\eta}-e^{tA_\eta}\bigr)E_k
 \right\rangle_{\rm HS}.
\]
From $G_0P=PG_0$ follow the invariance of both orthogonal summands,
$A_0=G_0|_{\cK_0}$ and the equality
\[
 e^{sA_0}P=e^{sG_0}P.
\]
In particular, $f(0)=0$. For a family $B_\xi=B_0+\xi W$ the Duhamel formula
gives
\begin{equation}
 \frac{d}{d\xi}e^{tB_\xi}
 =\int_0^t e^{(t-s)B_\xi}W e^{sB_\xi}\,ds.
 \label{eq:first-duhamel-ru}
\end{equation}
Applying it to $G_\eta$ and $A_\eta$ at $\eta=0$, we obtain
\[
 \left.\frac{d}{d\eta}\right|_{\eta=0}
 \bigl(e^{tG_\eta}-e^{tA_\eta}\bigr)E_k
 =\int_0^t e^{(t-s)G_0}(I-P)VP e^{sG_0}E_k\,ds.
\]
Each integrand vector belongs to $\cK_0^\perp$, because
$e^{(t-s)G_0}$ preserves this space. The condition
$\rho_0\in\cK_0$ gives $f'(0)=0$.

Repeated application of \eqref{eq:first-duhamel-ru} gives the exact
time-ordered formula
\begin{equation}
 \frac{d^2}{d\xi^2}e^{tB_\xi}
 =2\int_0^t\int_0^s
 e^{(t-s)B_\xi}W e^{(s-u)B_\xi}W e^{uB_\xi}\,du\,ds.
 \label{eq:second-duhamel-ru}
\end{equation}
The semigroup estimate from the proof of
Theorem~\ref{thm:residual-stability-ru} and the area of the simplex
$0\le u\le s\le t$ lead to
\begin{align*}
 \sup_{|\xi|\le\eta_0}
 \left\|\frac{d^2}{d\xi^2}e^{tG_\xi}\right\|_{2\to2}
 &\le t^2\norm{V}_{2\to2}^2e^{t\beta_G},\\
 \sup_{|\xi|\le\eta_0}
 \left\|\frac{d^2}{d\xi^2}e^{tA_\xi}\right\|_{2\to2}
 &\le t^2\norm{PVP}_{2\to2}^2e^{t\beta_A}.
\end{align*}
Consequently, for $|\xi|\le\eta_0$
\begin{align*}
 |f''(\xi)|
 \le t^2\norm{\rho_0}_{\rm HS}\norm{E_k}_{\rm HS}
 \left(
 \norm{V}_{2\to2}^2e^{t\beta_G}
 +\norm{PVP}_{2\to2}^2e^{t\beta_A}
 \right).
\end{align*}
The integral Taylor remainder has the form
\[
 f(\eta)=\int_0^\eta(\eta-\xi)f''(\xi)\,d\xi
\]
for $\eta\ge0$; for $\eta<0$ the same integral with reversed orientation is
used. In both cases the integral of the modulus of the weight function equals
$\eta^2/2$. Substituting the obtained bound for $f''$ proves
\eqref{eq:quadratic-residual-bound-ru}.
\end{proof}

The quadratic order has a geometric meaning. When $G_0P=PG_0$ the first-order
leakage lies in $\cK_0^\perp$ and has zero pairing with the compatible state
$\rho_0\in\cK_0$. To influence a probability, this component must return to
$\cK_0$ through a repeated action of the perturbation. The conditions on the
state and on the measurement ensure exactly this first-order cancellation.

\begin{corollary}[Visible spectral regimes]
An oscillating component of a probability is possible only if $A_E$ has a
visible eigenvalue $a+ib$ with $b\ne0$. The oscillation decays for
$a<0$ and keeps its amplitude for $a=0$. Invisibility is characterized by
orthogonality to the observable space: a state difference
$\Delta\rho\in\cK_E^\perp$ gives zero change in every outcome probability
at every time.
\end{corollary}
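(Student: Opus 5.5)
The plan is to read everything off the representation of Theorem~\ref{thm:visible-reduction-ru}. Since $p_k(t)=\Tr[\rho_0 e^{tA_E}(E_k)]$, we decompose $\cK_E$ into generalized eigenspaces of $A_E$:
\[
 \cK_E=\bigoplus_{\lambda\in\sigma(A_E)}\cK_E^{(\lambda)},
\]
and write $E_k=\sum_\lambda E_k^{(\lambda)}$. Then
\[
 p_k(t)=\sum_\lambda e^{\lambda t}\sum_{r<\nu(\lambda)}\frac{t^r}{r!}\Tr\bigl[\rho_0 N_\lambda^r E_k^{(\lambda)}\bigr],
\]
with $N_\lambda$ the nilpotent part. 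Because $p_k(t)$ is real, each eigenvalue $a+ib$ enters together with $\overline\lambda=a-ib$. The functions $t^re^{\lambda t}$ are linearly independent, so a nonconstant oscillating component, i.e.\ a term with a factor $\cos(bt)$ or $\sin(bt)$ and $b\ne0$, can appear only if some visible $\lambda=a+ib$ with $b\ne0$ carries a nonzero coefficient. This gives the first claim.

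For the decay statement, the modulus of the coefficient of the oscillation attached to $\lambda$ behaves like $|t^re^{at}|$. This tends to $0$ for $a<0$. For $a=0$ we argue from boundedness: $p_k(t)\in[0,1]$ for all $t$, and a Jordan block with $\operatorname{Re}\lambda=0$ would produce polynomial growth. We separate the purely imaginary frequencies using a Ces\`aro mean of $p_k(t)e^{-ibt}t^{-r}$ and conclude that the coefficient with $r\ge1$ vanishes. What remains is an oscillation $c\,e^{ibt}+\bar c\,e^{-ibt}$ of constant amplitude.

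For invisibility, let $\Delta\rho\in\cK_E^\perp$. By Theorem~\ref{thm:visible-reduction-ru}, $e^{tG}E_k=e^{tA_E}E_k\in\cK_E$. Therefore
\[
 \Tr\bigl[\Delta\rho\, e^{tG}E_k\bigr]=\bigl\langle \Delta\rho^\dagger, e^{tG}E_k\bigr\rangle_{\rm HS}.
\]
Since $\cK_E$ is closed under adjoints (the $E_k$ are self-adjoint and $G$ preserves Hermiticity, so $G(X^\dagger)=G(X)^\dagger$), we have $\Delta\rho^\dagger\in\cK_E^\perp$ as well, and the pairing vanishes. For the converse, Proposition~\ref{thm:universal-minimality-ru} shows that the trajectories $e^{tG}E_k$ span $\cK_E$. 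Hence if every outcome probability is unchanged at every time, then $\Delta\rho\perp\cK_E$.

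The main obstacle is the $a=0$ regime. The Ces\`aro frequency-separation argument must show that boundedness of $p_k$ excludes visible polynomial factors. Alternatively, one can invoke the contractivity of the Markov semigroup $e^{tG}$ in operator norm, which forces the peripheral spectrum to be semisimple. I would try that route first.
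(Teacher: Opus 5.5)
Your proposal is correct and follows the paper's route. It uses the Jordan expansion from Theorem~\ref{thm:visible-reduction-ru}, rules out nontrivial peripheral Jordan blocks through the operator-norm contractivity of the unital completely positive semigroup $e^{tG}$, and derives invisibility from $e^{tG}E_k\in\cK_E$. The contractivity step also gives $\operatorname{Re}\lambda\le 0$, which your Ces\`aro alternative would first have to secure by a dominant-term argument. Your adjoint-closure step for non-Hermitian $\Delta\rho$ and the converse via the span of the trajectories are valid additions to the paper's one-directional argument, which only treats Hermitian $\Delta\rho$.
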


\begin{proof}
By Theorem~\ref{thm:visible-reduction-ru} each probability is a sum of Jordan
contributions of the visible restriction. A complex-conjugate pair gives the
real terms $t^re^{at}\cos bt$ and $t^re^{at}\sin bt$.
The adjoint quantum Markov semigroup is completely positive and unital,
whence
$\norm{e^{tG}}_{\infty\to\infty}=1$ for $t\ge0$.
Boundedness of the semigroup implies $\operatorname{Re}\lambda\le0$.
A Jordan block of size greater than one at $\operatorname{Re}\lambda=0$ would
produce polynomial growth $t^r$, contradicting this boundedness. Hence the
peripheral blocks are semisimple. Finally, $e^{tG}E_k\in\cK_E$ implies
$\Tr[\Delta\rho\,e^{tG}E_k]=0$ for every Hermitian
$\Delta\rho\in\cK_E^\perp$, which proves the invisibility statement.
\end{proof}

\begin{proposition}[Operational equivalence]
\label{prop:equivalence-ru}
Let $\cL_1^\dagger$ and $\cL_2^\dagger$ preserve a common subspace
$\cK$ containing all $E_k$ and coincide on $\cK$. Then they give the same
outcome probabilities for any initial state and any $t\ge0$.
\end{proposition}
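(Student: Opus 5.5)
The plan is to show that the two Heisenberg trajectories of every effect coincide, and then to pair them with the initial state. Write $G_i:=\cL_i^\dagger$ for $i=1,2$. By hypothesis $G_i(\cK)\subseteq\cK$ and $G_1|_{\cK}=G_2|_{\cK}=:A$. A possible shortcut would be to note that $\cK_E\subseteq\cK$ and then invoke Theorem~\ref{thm:visible-reduction-ru}. I prefer to argue directly on $\cK$, because this avoids comparing two possibly different visible spaces $\cK_E^{(1)}$ and $\cK_E^{(2)}$.

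The first step is an induction on $r$ showing that $G_1^rE_k=G_2^rE_k=A^rE_k\in\cK$ for every $k$ and every $r\ge0$. The base case $r=0$ is just $E_k\in\cK$. For the inductive step, assume $X:=G_1^rE_k=G_2^rE_k$ lies in $\cK$. Then
\[
G_1^{r+1}E_k=G_1X=AX=G_2X=G_2^{r+1}E_k,
\]
and this operator lies in $\cK$ by invariance.

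The second step passes to the exponentials. In finite dimension the series $e^{tG_i}E_k=\sum_{r\ge0}\frac{t^r}{r!}G_i^rE_k$ converges for every $t\ge0$, and the two series agree term by term by the first step. Hence $e^{tG_1}E_k=e^{tG_2}E_k=e^{tA}E_k$. Equivalently, $\cK$ is finite-dimensional and invariant under both generators, so $e^{tG_i}|_{\cK}=e^{tA}$, exactly as in the proof of Theorem~\ref{thm:visible-reduction-ru}.

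The third step applies the linear functional $X\mapsto\Tr(\rho_0X)$, which gives
\[
p_k^{(1)}(t)=\Tr\!\left[\rho_0e^{tG_1}(E_k)\right]=\Tr\!\left[\rho_0e^{tG_2}(E_k)\right]=p_k^{(2)}(t)
\]
for every initial state $\rho_0$ and every $t\ge0$, by \eqref{eq:heisenberg-probability-ru}.

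There is no genuine obstacle here. The only point needing care is that the two generators may differ off $\cK$, so each application of $G_i$ must be justified by first placing its argument inside $\cK$. The induction in the first step takes care of this.
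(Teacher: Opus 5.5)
Your proof is correct and follows the paper's argument: invariance gives $G_1^rE_k=A^rE_k=G_2^rE_k$ for all $r\ge0$, the exponential series then agree term by term, and pairing with $\rho_0$ yields equal outcome probabilities. Your explicit induction just spells out the step that the paper states in one line.
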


\begin{proof}
Denote the common restriction by
$A=\cL_1^\dagger|_{\cK}=\cL_2^\dagger|_{\cK}$. Invariance gives
$(\cL_1^\dagger)^rE_k=A^rE_k=(\cL_2^\dagger)^rE_k$ for each $r\ge0$.
Comparing the power series of the exponentials yields
$e^{t\cL_1^\dagger}E_k=e^{tA}E_k=e^{t\cL_2^\dagger}E_k$. Substituting into
\eqref{eq:heisenberg-probability-ru} completes the proof.
\end{proof}

This proposition fixes the class of operational equivalence: QPE identifies
generators that coincide on the visible subspace.

\begin{remark}[Closure on the probability vector]
If the $E_k$ are mutually orthogonal projectors and $\sum_kE_k=I$, then the
equality $\dot p=Qp$ with a single matrix $Q$ for all initial states is
equivalent to invariance of $\Span\{E_k\}$ under $G$; in that case $Q$
is the generator of a Markov chain. For general QPE noise this hull may be
non-invariant, and $\cK_{\rm QPE}$ adds the necessary intermediate coherences.
\end{remark}

\section{Exact QPE-visible projection}

Represent an operator of the counting register in the form
$X=\sum_{j,\ell=0}^{M-1}X_{j\ell}\ket j\bra\ell$. For
$r\in\mathbb Z_M$ introduce the cyclic sum of a diagonal
\begin{equation}
 c_r(X):=
 \sum_{\substack{0\le j,\ell<M\\j-\ell\equiv r\;({\rm mod}\,M)}}X_{j\ell}.
 \label{eq:cyclic-diagonal-ru}
\end{equation}

\begin{theorem}[Visibility of cyclic diagonals]
\label{thm:cyclic-visibility-ru}
The QPE probability vector and the cyclic sums of the diagonals are related by
the discrete Fourier transform:
\begin{align}
 p_k&=\frac1M\sum_{r=0}^{M-1}c_r(\rho_C)e^{-2\pi ikr/M},
 \label{eq:p-from-cyclic-ru}\\
 c_r(\rho_C)&=\sum_{k=0}^{M-1}p_ke^{2\pi ikr/M}.
 \label{eq:cyclic-from-p-ru}
\end{align}
Consequently, a perturbation $X$ is instantaneously invisible to QPE if and
only if
\begin{equation}
 c_r(X)=0\qquad\text{for all }r\in\mathbb Z_M.
 \label{eq:instant-invisible-ru}
\end{equation}
\end{theorem}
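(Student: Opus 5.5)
The plan is a direct computation of the matrix elements of the effects $E_k$ in the computational basis, followed by discrete Fourier inversion. From the convention $F_M\ket k=M^{-1/2}\sum_j e^{2\pi ijk/M}\ket j$ the effect \eqref{eq:qpe-effects-ru} is
\[
 E_k=\frac1M\sum_{j,\ell=0}^{M-1}e^{2\pi i(j-\ell)k/M}\ket j\bra\ell ,
\]
so $(E_k)_{\ell j}=M^{-1}e^{-2\pi i(j-\ell)k/M}$. Substituting into \eqref{eq:qpe-projection-ru} gives $p_k=\Tr(\rho_CE_k)=\sum_{j,\ell}(\rho_C)_{j\ell}(E_k)_{\ell j}=M^{-1}\sum_{j,\ell}(\rho_C)_{j\ell}e^{-2\pi i(j-\ell)k/M}$.

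The key observation is that the phase factor depends on $(j,\ell)$ only through $j-\ell$ modulo $M$, because $e^{-2\pi i Mk/M}=1$. We therefore regroup the double sum by the residue class $r\equiv j-\ell\pmod M$. Each pair $(j,\ell)$ falls into exactly one class, and within the class for $r$ the entries sum to $c_r(\rho_C)$ by \eqref{eq:cyclic-diagonal-ru}. This yields \eqref{eq:p-from-cyclic-ru} directly.

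Formula \eqref{eq:cyclic-from-p-ru} follows by multiplying \eqref{eq:p-from-cyclic-ru} by $e^{2\pi ikr'/M}$ and summing over $k$. The orthogonality relation $\sum_{k}e^{2\pi ik(r'-r)/M}=M\,\delta_{r\equiv r'}$ collapses the sum to $c_{r'}(\rho_C)$. None of these steps uses positivity or unit trace, so both identities hold for an arbitrary operator $X$ in place of $\rho_C$, with $p_k$ replaced by the linear functional $\Tr(XE_k)$.

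For the invisibility criterion, call $X$ instantaneously invisible when $\Tr(XE_k)=0$ for all $k$. By the linear versions of \eqref{eq:p-from-cyclic-ru}--\eqref{eq:cyclic-from-p-ru}, the map $(c_r(X))_r\mapsto(\Tr(XE_k))_k$ is the discrete Fourier transform scaled by $M^{-1}$, which is invertible. Hence all $\Tr(XE_k)$ vanish if and only if all $c_r(X)$ vanish, which is \eqref{eq:instant-invisible-ru}.

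The argument involves no genuine obstacle. The only point requiring care is the sign and ordering convention for $(E_k)_{\ell j}$ versus $(E_k)_{j\ell}$ in the trace, since it determines whether the kernel in \eqref{eq:p-from-cyclic-ru} is $e^{-2\pi ikr/M}$ or $e^{+2\pi ikr/M}$. I would check this with the ideal state \eqref{eq:clock-state-ru}: there $c_r=\sum_{j-\ell\equiv r}M^{-1}e^{2\pi i(j-\ell)\omega}$, and at $\omega=k_0/M$ the formula must return $p_k=\delta_{kk_0}$.
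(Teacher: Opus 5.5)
Your proposal is correct and follows the same route as the paper: compute $(E_k)_{\ell j}=M^{-1}e^{-2\pi ik(j-\ell)/M}$, regroup the trace by the residue classes of $j-\ell$, invert using the orthogonality of the characters, and deduce the invisibility criterion from the invertibility of the discrete Fourier transform. You also state explicitly that the identities are linear and hold for an arbitrary operator $X$, which the paper uses only implicitly when it writes $p_k(X)$.
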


\begin{proof}
The matrix elements of the effect are
$(E_k)_{\ell j}=M^{-1}e^{-2\pi ik(j-\ell)/M}$. Therefore
\begin{align*}
 p_k
 &=\sum_{j,\ell}(\rho_C)_{j\ell}(E_k)_{\ell j}\\
 &=\frac1M\sum_{r=0}^{M-1}
 \left(\sum_{j-\ell\equiv r\ ({\rm mod}\,M)}
 (\rho_C)_{j\ell}\right)e^{-2\pi ikr/M},
\end{align*}
which coincides with \eqref{eq:p-from-cyclic-ru}. Orthogonality of the
characters
\[
 \frac1M\sum_{k=0}^{M-1}e^{2\pi ik(r-s)/M}=\mathbf1_{r=s}
\]
gives the inverse formula \eqref{eq:cyclic-from-p-ru}. The discrete Fourier
transform is invertible, so $p_k(X)=0$ for all $k$ is equivalent to the
conditions \eqref{eq:instant-invisible-ru}.
\end{proof}

Dynamical invisibility is characterized by orthogonality to the entire space
$\cK_{\rm QPE}$.

\subsection{Representation via coherence transfer}

Coherences with respect to the computational basis are the off-diagonal
elements of the state matrix; for a pure state
$\sum_j a_j\ket j$ they equal $a_j\overline{a_\ell}$ and contain the relative
phases of the amplitudes.
For an input eigenstate, assume that noise acts on the counting register as
a population-preserving entrywise-multiplication (Schur) channel. This gives
the reduced matrix
\begin{equation}
 (\rho_C)_{j\ell}
 =\frac1M e^{2\pi i\omega(j-\ell)}C_{j\ell},
 \qquad C_{jj}=1.
 \label{eq:coherence-transfer-ru}
\end{equation}
Here $C=(C_{j\ell})$ is the coherence-transfer matrix.
The condition $C_{jj}=1$ expresses preservation of the populations.
Complete positivity of the entrywise-multiplication channel is equivalent to
$C\succeq0$. Sufficiency follows from the Schur product theorem for positive
matrices. In the opposite direction, the Choi matrix of such a channel is
supported on
$\Span\{\ket j\otimes\ket j\}_{j=0}^{M-1}$, and its matrix in the orthonormal
basis $\{\ket j\otimes\ket j\}$ equals $C$. Therefore positivity of the Choi
matrix implies $C\succeq0$.
The exact QPE law has the form
\begin{equation}
 p_k=\frac{1}{M^2}\sum_{j,\ell=0}^{M-1}
 C_{j\ell}e^{2\pi i(j-\ell)(\omega-k/M)}.
 \label{eq:general-transfer-law-ru}
\end{equation}
This formula expresses the outcome probabilities through the individual
coherence-transfer coefficients.

If $C_{j\ell}=g_{j-\ell}$ is a Toeplitz matrix and
$g_{-d}=\overline{g_d}$, then
\begin{equation}
 p_k=\frac{1}{M^2}\left[
 M+2\sum_{d=1}^{M-1}(M-d)
 \operatorname{Re}\!\left(g_de^{2\pi id(\omega-k/M)}\right)
 \right].
 \label{eq:toeplitz-law-ru}
\end{equation}

\begin{proposition}[Equal Ramsey contrast and different QPE outcomes]
\label{prop:ramsey-separation-ru}
For $M=4$ there exist two random-unitary phase channels with the common value
$g_1=1/2$ which at $\omega=0$ give the distributions
\begin{equation}
 (p_0^{(+)},p_1^{(+)},p_2^{(+)},p_3^{(+)})
 =\left(\frac34,0,\frac14,0\right),
 \qquad
 (p_0^{(-)},p_1^{(-)},p_2^{(-)},p_3^{(-)})
 =\left(\frac12,\frac14,0,\frac14\right).
 \label{eq:ramsey-separated-qpe-ru}
\end{equation}
Their single-step Ramsey contrasts coincide, whereas the supports of the QPE
distributions and the outcome weights differ.
\end{proposition}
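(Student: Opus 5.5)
The plan is to give both channels explicitly as convex mixtures of diagonal phase unitaries on the counting register, and then to read off the QPE distribution by linearity. For $\theta\in\mathbb R$ set
\[
 D_\theta:=\sum_{j=0}^{M-1}e^{i\theta j}\proj{j}.
\]
On the binary digits of $j$ this operator factorizes into single-qubit $Z$-rotations with angles $\theta2^b$, so it is a physical unitary on the register. A random-unitary channel $\rho\mapsto\sum_a q_aD_{\theta_a}\rho D_{\theta_a}^\dagger$ multiplies entries entrywise by $C_{j\ell}=\sum_aq_ae^{i\theta_a(j-\ell)}$. This $C$ is Toeplitz with $g_d=\sum_aq_ae^{i\theta_ad}$, so it falls under \eqref{eq:coherence-transfer-ru}. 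Complete positivity holds automatically, since the channel is a convex combination of unitary conjugations.

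For $M=4$ I choose the following two mixtures:
\begin{itemize}
\item $(+)$: $\theta=0$ with weight $3/4$ and $\theta=\pi$ with weight $1/4$. This gives $g_1=3/4-1/4=1/2$.
\item $(-)$: $\theta=0$ with weight $1/2$ and $\theta=\pm\pi/2$ with weight $1/4$ each. This gives $g_1=1/2+\tfrac14(i-i)=1/2$.
\end{itemize}
In both cases $g_1=1/2$ is real and positive. The single-step Ramsey data therefore coincide completely, both the contrast $|g_1|$ and its argument.

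To compute the distributions, note that $D_\theta\ket{\phi_\omega}=\ket{\phi_{\omega+\theta/2\pi}}$ by \eqref{eq:clock-state-ru}. Each channel output is therefore a convex mixture of ideal clock states at the shifted phases. At $\omega=0$ the shifts $\theta/2\pi\in\{0,1/4,1/2,3/4\}$ are exactly grid nodes $k/4$. By \eqref{eq:ideal-fejer-ru} with the node convention, each such clock state gives the deterministic outcome $k$. Linearity of $p_k=\Tr(\rho_CE_k)$ then yields:
\begin{itemize}
\item $(+)$: weight $3/4$ on $k=0$ and $1/4$ on $k=2$, i.e. $(3/4,0,1/4,0)$.
\item $(-)$: weight $1/2$ on $k=0$ and $1/4$ each on $k=1$ and $k=3$, i.e. $(1/2,1/4,0,1/4)$.
\end{itemize}
This matches \eqref{eq:ramsey-separated-qpe-ru}. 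As an independent check I would also substitute the full harmonic sequences into the Toeplitz law \eqref{eq:toeplitz-law-ru}:
\begin{itemize}
\item $(+)$: $g_2=1$ and $g_3=1/2$.
\item $(-)$: $g_2=0$ and $g_3=1/2$.
\end{itemize}
The two channels differ only in the second harmonic, which is precisely the information invisible to a single Ramsey step.

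No step is a real obstacle; the only point needing care is the node convention for the Fejér kernel. I would avoid it entirely by computing $p_k=|\langle\phi_{k/M}|\phi_{\omega'}\rangle|^2=\delta$ directly from orthonormality of the Fourier basis. The final claim then follows by inspection: the supports $\{0,2\}$ and $\{0,1,3\}$ differ, and so does the weight of outcome $0$ ($3/4$ versus $1/2$).
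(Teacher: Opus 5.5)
Your proof is correct, and your two channels are exactly the paper's. Written with $e^{2\pi i\theta N}$ as in the paper, your angles $\theta=\pi$ and $\theta=\pm\pi/2$ are the atoms $1/2$ and $1/4,\,3/4$ of the measures $\nu_+$ and $\nu_-$.

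The difference lies in how you obtain the distributions. The paper computes the full harmonic sequences $(g_1,g_2,g_3)=(1/2,1,1/2)$ and $(1/2,0,1/2)$. It then substitutes them into the Toeplitz law \eqref{eq:toeplitz-law-ru}, specialized to $M=4$, $\omega=0$ and real $g_d$. You instead use $D_\theta\ket{\phi_\omega}=\ket{\phi_{\omega+\theta/2\pi}}$. This makes the output a convex mixture of clock states at grid phases, each equal to $F_M\ket{k}$ for some $k$ and therefore detected with certainty as outcome $k$. The QPE distribution is then just the mixing measure read off on $\mathbb Z_4$.

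Your route is more elementary. It sidesteps the node convention of the Fej\'er kernel and explains why the supports $\{0,2\}$ and $\{0,1,3\}$ coincide with the atoms of the phase-noise measure. The paper's route stays inside the harmonic framework of that section. It makes explicit that the two distributions differ only through $g_2$, a coefficient a single Ramsey step does not probe; you recover this only in your cross-check.
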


\begin{proof}
Let $N=\sum_{j=0}^{3}j\proj j$ and
\begin{equation}
 \Phi_\nu(\rho)=\int_0^1
 e^{2\pi i\theta N}\rho e^{-2\pi i\theta N}\,d\nu(\theta).
 \label{eq:random-phase-channel-ru}
\end{equation}
This is a random-unitary channel with
$g_d=\int_0^1e^{2\pi id\theta}\,d\nu(\theta)$. Take
\[
 \nu_+=\tfrac34\delta_0+\tfrac14\delta_{1/2},
 \qquad
 \nu_-=\tfrac12\delta_0+\tfrac14\delta_{1/4}
 +\tfrac14\delta_{3/4}.
\]
Then
$g_d^{(+)}=3/4+(-1)^d/4$ and
$g_d^{(-)}=1/2+\cos(\pi d/2)/2$. For $d=1,2,3$ one obtains the sets
$(1/2,1,1/2)$ and $(1/2,0,1/2)$. For real $g_d$, $M=4$ and
$\omega=0$ formula \eqref{eq:toeplitz-law-ru} takes the form
\[
 p_k=\frac1{16}\left[
 4+6g_1\cos\frac{\pi k}{2}
 +4g_2\cos(\pi k)
 +2g_3\cos\frac{3\pi k}{2}
 \right].
\]
For the two sets of harmonics this gives, respectively,
\[
 16p^{(+)}=(12,0,4,0),
 \qquad
 16p^{(-)}=(8,4,0,4),
\]
which proves \eqref{eq:ramsey-separated-qpe-ru}.
\end{proof}

Proposition~\ref{prop:ramsey-separation-ru} demonstrates that the QPE
distribution depends on cyclic linear combinations of several harmonics $g_d$.

For an exact description of this dependence set
\begin{equation}
 \widetilde g_d:=g_de^{2\pi id\omega},
 \qquad h_0:=M,
 \qquad
 h_r:=(M-r)\widetilde g_r+r\widetilde g_{r-M},
 \quad r=1,\ldots,M-1.
 \label{eq:observable-harmonic-combinations-ru}
\end{equation}
For $1\le r\le M-1$ among the pairs with $j-\ell\equiv r\pmod M$ there are
$M-r$ pairs with $j-\ell=r$ and $r$ pairs with $j-\ell=r-M$. Therefore
$c_0(\rho_C)=1$, $c_r(\rho_C)=h_r/M$, and
Theorem~\ref{thm:cyclic-visibility-ru} gives
\begin{equation}
 p_k=\frac1{M^2}\sum_{r=0}^{M-1}h_re^{-2\pi ikr/M}.
 \label{eq:qpe-harmonic-combinations-ru}
\end{equation}

Invertibility of the discrete Fourier transform in
\eqref{eq:qpe-harmonic-combinations-ru} shows that two Toeplitz sequences give
the same QPE distribution at fixed $M$ and $\omega$ exactly when all $h_r$
coincide. For $g_d=a_d\ge0$ the kernel is even and maximal at $k/M=\omega$,
and the factor $e^{2\pi id\delta}$ shifts its centre to $\omega+\delta$.
Changes that preserve the cyclic sums are operationally invisible.

Thus the algorithmic phase error $\epsilon_{\rm alg}$ is represented by a
coherent shift $\delta$, Markov decoherence changes the moduli $a_d$, and the
readout error acts already on the finished vector $p$.

\section{Damped Fej\'er kernel and the visible coherence length}

Consider the Markov law of contrast transfer
\begin{equation}
 g_d=e^{-\alpha|d|}e^{2\pi id\delta},
 \qquad \alpha\ge0.
 \label{eq:markov-transfer-ru}
\end{equation}
Here $\alpha$ is the dimensionless integrated damping per unit distance
between time branches, and $\delta$ is the coherent phase detuning.
Substitution into \eqref{eq:toeplitz-law-ru} gives
\begin{equation}
 p_k^{(\alpha,\delta)}
 =\frac{1}{M^2}\left[
 M+2\sum_{d=1}^{M-1}(M-d)e^{-\alpha d}
 \cos 2\pi d\left(\omega+\delta-\frac{k}{M}\right)
 \right].
 \label{eq:damped-fejer-ru}
\end{equation}

First we compute the visible space for a general generator that is diagonal on
the matrix units. Such quantum Markov semigroups are semigroups of Schur
multipliers; their completely positive realizations through conditionally
negative definite kernels belong to the general theory
\cite{WirthZhang2023}. The following theorem covers inhomogeneous purely
dephasing semigroups, in which the damping rate depends on both indices of the
time branches, and computes the QPE-visible part by spectral classes.

\begin{theorem}[Spectral formula for the QPE-visible dimension]
\label{thm:diagonal-generator-dimension-ru}
Let $e_{j\ell}:=\ket j\bra\ell$ and let the adjoint generator $G$ be
diagonal in the basis of matrix units:
\begin{equation}
 G(e_{j\ell})=\lambda_{j\ell}e_{j\ell},
 \qquad 0\le j,\ell<M.
 \label{eq:diagonal-generator-ru}
\end{equation}
For $r\in\mathbb Z_M$ set
\begin{align}
 \Lambda_r
 &:=\{\lambda_{j\ell}:j-\ell\equiv r\pmod M\},
 \label{eq:residue-spectrum-ru}\\
 B_{r,\zeta}
 &:=\sum_{\substack{0\le j,\ell<M\\
 j-\ell\equiv r\ ({\rm mod}\,M)\\
 \lambda_{j\ell}=\zeta}}e_{j\ell},
 \qquad \zeta\in\Lambda_r.
 \label{eq:residue-eigenmode-ru}
\end{align}
Then the minimal QPE-visible space has the exact decomposition
\begin{equation}
 \cK_{\rm QPE}
 =\bigoplus_{r\in\mathbb Z_M}
 \Span\{B_{r,\zeta}:\zeta\in\Lambda_r\},
 \qquad
 \dim_{\mathbb C}\cK_{\rm QPE}
 =\sum_{r\in\mathbb Z_M}|\Lambda_r|.
 \label{eq:general-diagonal-visible-space-ru}
\end{equation}
If $G$ is the adjoint generator of a quantum Markov semigroup,
then $\lambda_{jj}=0$, hence $|\Lambda_0|=1$ and
\begin{equation}
 M\le\dim_{\mathbb C}\cK_{\rm QPE}\le M^2-M+1.
 \label{eq:diagonal-visible-bounds-ru}
\end{equation}
Both bounds are attained in the class of diagonal generators.
\end{theorem}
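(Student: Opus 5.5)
The plan is to compute $\cK_{\rm QPE}$ directly from the Krylov definition \eqref{eq:visible-space-ru}, using the Fourier structure of the effects. By the matrix elements computed in the proof of Theorem~\ref{thm:cyclic-visibility-ru}, $E_k=\frac1M\sum_{j,\ell}e^{2\pi ik(\ell-j)/M}e_{j\ell}$. Group the matrix units by residue and set $S_r:=\sum_{j-\ell\equiv r}e_{j\ell}$. Then $E_k=\frac1M\sum_r e^{-2\pi ikr/M}S_r$, and since the DFT is invertible, $\Span\{E_k\}=\Span\{S_r:r\in\mathbb Z_M\}$. Moreover $G$ is diagonal on matrix units, so each residue sector $\mathcal B_r:=\Span\{e_{j\ell}:j-\ell\equiv r\}$ is $G$-invariant. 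The sectors are mutually orthogonal, so the Krylov space splits as $\cK_{\rm QPE}=\bigoplus_r\Span\{G^pS_r:p\ge0\}$.

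Within one sector, write $S_r=\sum_{\zeta\in\Lambda_r}B_{r,\zeta}$. This is a sum of eigenvectors of $G$ with pairwise distinct eigenvalues, and the $B_{r,\zeta}$ are nonzero with disjoint supports. Hence $G^pS_r=\sum_\zeta\zeta^pB_{r,\zeta}$. The Krylov span of $S_r$ is therefore contained in $\Span\{B_{r,\zeta}\}$. It equals this span because the Vandermonde matrix $(\zeta^p)_{\zeta\in\Lambda_r,\,0\le p<|\Lambda_r|}$ is invertible for distinct nodes; equivalently, Lagrange interpolation polynomials $q_\zeta(G)$ applied to $S_r$ isolate each $B_{r,\zeta}$. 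Linear independence of the $B_{r,\zeta}$ (disjoint supports) then gives $\dim=\sum_r|\Lambda_r|$. The only point needing care is this Vandermonde step, together with noting that $\zeta=0$ causes no trouble if we use $p\ge0$ including $p=0$.

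For the bounds, assume $G$ is the adjoint generator of a quantum Markov semigroup. Unitality gives $G(I)=0$. Since $I=\sum_je_{jj}$ and $G$ is diagonal, this forces $\lambda_{jj}=0$. All $\lambda_{j\ell}$ in the class $r=0$ are diagonal ones, because $|j-\ell|<M$ makes $j-\ell\equiv0$ imply $j=\ell$. Hence $\Lambda_0=\{0\}$. Each $\Lambda_r$ is nonempty, giving $\dim\ge M$. For $r\ne0$ the class contains exactly $M$ pairs, so $|\Lambda_r|\le M$, and the total is at most $1+(M-1)M=M^2-M+1$.

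For attainment, the lower bound is reached by $G=0$ (the identity semigroup), or by any Toeplitz dephasing whose $\lambda_{j\ell}$ is constant on each residue class. The upper bound needs all $\lambda_{j\ell}$, $j\ne\ell$, to be distinct within each residue class, while $G$ still generates a QMS. I would take a pure dephasing $G(X)=\sum_m\gamma_m(L_mXL_m-\tfrac12\{L_m^2,X\})$ with diagonal Hermitian $L_m=\mathrm{diag}(x^{(m)})$. This gives $\lambda_{j\ell}=-\frac12\sum_m\gamma_m(x_j^{(m)}-x_\ell^{(m)})^2$. One could also add a Hamiltonian $H=\mathrm{diag}(h_j)$, which contributes $i(h_j-h_\ell)$. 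The simplest choice is generic real $h_j$, for example $h_j=2^j$ or $h_j$ rationally independent: then the purely imaginary parts $h_j-h_\ell$ are distinct for all distinct ordered pairs with $j\ne\ell$. This realizes the maximum, and it is an honest GKLS generator. Verifying this distinctness is the step I expect to need explicit care. With $h_j=2^j$, distinct ordered pairs give distinct differences $2^j-2^\ell$ by uniqueness of signed binary representations.
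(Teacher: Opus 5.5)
Your proposal is correct and follows the paper's proof essentially step for step: residue sectors, the Fourier-transformed effects $F_r=\sum_{j-\ell\equiv r}e_{j\ell}$, Vandermonde/Lagrange isolation of the $B_{r,\zeta}$, unitality giving $\lambda_{jj}=0$, and $G=0$ and $H=\sum_j 2^j\proj{j}$ for attainment (the sign flip in your expansion of $E_k$, which should carry $e^{2\pi ik(j-\ell)/M}$, is harmless). The only point to tighten is the distinctness of $2^j-2^\ell$, because signed binary expansions are not unique in general (e.g.\ $2^1-2^0=2^0$); use the paper's argument instead (the sign fixes the order, the $2$-adic valuation is $\min(j,\ell)$, the odd factor $2^{|j-\ell|}-1$ fixes $|j-\ell|$), and note that pure dephasing alone could never attain the maximum, since there $\lambda_{\ell j}=\lambda_{j\ell}$ while for $r=M/2$ the transpose of every pair in the class lies in the same class, so the Hamiltonian part is essential.
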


\begin{proof}
For each residue introduce the orthogonal coordinate space
\begin{equation}
 \mathcal D_r
 :=\Span\{e_{j\ell}:j-\ell\equiv r\pmod M\}.
\end{equation}
These spaces form a direct decomposition of $\mathcal B(\cH_C)$ and are
invariant under $G$. The discrete Fourier transform of the effects
\eqref{eq:qpe-effects-ru} gives
\begin{equation}
 F_r:=\sum_{k=0}^{M-1}e^{-2\pi ikr/M}E_k
 =\sum_{j-\ell\equiv r\ ({\rm mod}\,M)}e_{j\ell}
 \in\mathcal D_r.
 \label{eq:general-residue-effect-ru}
\end{equation}
Since this transform is invertible,
\begin{equation}
 \cK_{\rm QPE}
 =\bigoplus_{r\in\mathbb Z_M}
 \Span\{G^sF_r:s\ge0\}.
 \label{eq:residue-krylov-sum-ru}
\end{equation}
Diagonality of the generator implies
\begin{equation}
 G^sF_r
 =\sum_{\zeta\in\Lambda_r}\zeta^sB_{r,\zeta}.
 \label{eq:residue-vandermonde-ru}
\end{equation}
Let $q_r=|\Lambda_r|$. For $s=0,\ldots,q_r-1$ the coefficients in
\eqref{eq:residue-vandermonde-ru} form a Vandermonde matrix on the distinct
numbers $\zeta\in\Lambda_r$. Its determinant is nonzero. Therefore the
Krylov span of $F_r$ coincides with the linear span of the $q_r$ nonzero
matrices $B_{r,\zeta}$. Their supports are pairwise disjoint, and the spaces
$\mathcal D_r$ form a direct sum. This proves
\eqref{eq:general-diagonal-visible-space-ru}.

Each mode can be isolated constructively by means of the Lagrange
interpolation polynomial:
\begin{equation}
 B_{r,\zeta}
 =\left[\prod_{\substack{\xi\in\Lambda_r\\\xi\ne\zeta}}
 \frac{G-\xi\id}{\zeta-\xi}\right]F_r.
 \label{eq:lagrange-visible-mode-ru}
\end{equation}
Here $\id$ is the identity operator on the space of matrices, and the empty
product equals $\id$. The polynomial takes the value one at $\zeta$ and zero
at the other elements of $\Lambda_r$; its degree $q_r-1$ explicitly bounds the
number of powers of $G$ required.

Unitality of the adjoint Markov semigroup gives $G(I)=0$. By
\eqref{eq:diagonal-generator-ru} and the linear independence of the $e_{jj}$ we
obtain $\lambda_{jj}=0$. Each residue class contains $M$ matrix units, whence
$1\le|\Lambda_r|\le M$ for $r\ne0$, and the bounds
\eqref{eq:diagonal-visible-bounds-ru} follow. The lower bound is attained at
$G=0$. The upper bound is attained for a purely Hamiltonian generator with
$H=\sum_{j=0}^{M-1}2^j\proj j$: the numbers
$i(2^j-2^\ell)$ are pairwise distinct for all ordered pairs $j\ne\ell$.
Indeed, the sign of the difference fixes the order of the indices, its
$2$-adic valuation equals the smaller index, and the remaining odd factor
determines their difference.
\end{proof}

\begin{example}[Coinciding frequencies in different cyclic classes]
\label{ex:eight-visible-modes-ru}
Let $M=4$, $H=\operatorname{diag}(0,1,2,1)$ and $G(X)=i[H,X]$
in units $\hbar=1$. Then $\lambda_{j\ell}=i(h_j-h_\ell)$, whence
\[
\begin{array}{c|c|c}
r&\Lambda_r&|\Lambda_r|\\\hline
0&\{0\}&1\\
1&\{i,-i\}&2\\
2&\{2i,0,-2i\}&3\\
3&\{i,-i\}&2
\end{array}
\qquad \dim\cK_{\rm QPE}=1+2+3+2=8.
\]
For example, in the class $r=1$ set
$B_+=e_{10}+e_{21}$, $B_-=e_{32}+e_{03}$.
We have $F_1=B_++B_-$, $GF_1=iB_+-iB_-$ and hence
\[
 B_\pm=\tfrac12(F_1\mp iGF_1),\qquad
 e^{tG}F_1=e^{it}B_++e^{-it}B_-.
\]
In the class $r=3$ the same frequencies $\pm i$ correspond to different matrix
supports. In total the generator has five distinct eigenvalues, while the
visible space has eight independent directions. Thus frequencies must be
counted separately within each cyclic class. This explains the sum of
cardinalities in \eqref{eq:general-diagonal-visible-space-ru}.
\end{example}

\begin{example}[Independent qubit dephasing]
\label{ex:local-dephasing-ru}
Let $M=2^n$, let $j_a\in\{0,1\}$ be a digit of the number $j$,
$Z_a\ket j=(-1)^{j_a}\ket j$, $L_a=\sqrt{\gamma_a/2}\,Z_a$,
$\gamma_a\ge0$, and $H=\sum_jh_j\proj j$. Then
\[
 L_a^\dagger e_{j\ell}L_a
 -\tfrac12\{L_a^\dagger L_a,e_{j\ell}\}
 =-\gamma_a\mathbf1_{j_a\ne\ell_a}e_{j\ell},
\]
and therefore
\[
 G(e_{j\ell})=\lambda_{j\ell}e_{j\ell},\qquad
 \lambda_{j\ell}=i(h_j-h_\ell)
 -\sum_{a=0}^{n-1}\gamma_a\mathbf1_{j_a\ne\ell_a}.
\]
The spectral theorem applies for any rates $\gamma_a$.
For $n=2$, $H=0$ and $\gamma_0=\gamma_1=\gamma>0$ we have
\[
 \Lambda_0=\{0\},\quad
 \Lambda_1=\Lambda_3=\{-\gamma,-2\gamma\},\quad
 \Lambda_2=\{-\gamma\},\qquad \dim\cK_{\rm QPE}=6.
\]
Here the pairs $(1,0)$ and $(2,1)$ have index difference one and Hamming
distances one and two, respectively. The distinct damping rates on this
ordinary diagonal enter the exact dimension formula through the spectral
classes $\Lambda_r$.
\end{example}

\begin{corollary}[Spectral blocks of the Gramian]
\label{cor:diagonal-gramian-blocks-ru}
Under the conditions of Theorem~\ref{thm:diagonal-generator-dimension-ru} the
Gramian $\mathfrak G_T$ decomposes along the spaces $\mathcal D_r$. In the
orthonormal basis
$\{e_{j\ell}:j-\ell\equiv r\pmod M\}$ its $r$-th block has the matrix
\begin{equation}
 \left[\mathfrak G_{T,r}\right]_{(j,\ell),(p,q)}
 =\frac1M\,\varphi_T\!\left(
 \lambda_{j\ell}+\overline{\lambda_{pq}}\right),
 \label{eq:diagonal-gramian-block-ru}
\end{equation}
where both pairs of indices belong to the class $r$ and
\begin{equation}
 \varphi_T(z):=
 \begin{cases}
  (e^{Tz}-1)/z,&z\ne0,\\
  T,&z=0.
 \end{cases}
 \label{eq:gramian-scalar-kernel-ru}
\end{equation}
Moreover,
\begin{equation}
 \operatorname{rank}\mathfrak G_{T,r}=|\Lambda_r|.
 \label{eq:diagonal-gramian-rank-ru}
\end{equation}
\end{corollary}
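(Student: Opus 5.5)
Since $G$ is diagonal on the matrix units, the trajectories $x_k(t)=e^{tG}E_k$ can be computed explicitly. The effects have entries $(E_k)_{j\ell}=M^{-1}e^{2\pi ik(j-\ell)/M}$, so $E_k=M^{-1}\sum_{j,\ell}\omega_M^{k(j-\ell)}e_{j\ell}$ with $\omega_M=e^{2\pi i/M}$. Hence
\[
 x_k(t)=\frac1M\sum_{j,\ell}\omega_M^{k(j-\ell)}e^{t\lambda_{j\ell}}e_{j\ell}.
\]
The plan is to insert this into the definition \eqref{eq:finite-time-gramian-ru} and read off the matrix elements in the orthonormal basis $\{e_{j\ell}\}$. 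We have
\[
 \langle e_{j\ell},\mathfrak G_Te_{pq}\rangle_{\rm HS}=\sum_k\int_0^T(x_k)_{j\ell}\overline{(x_k)_{pq}}\,dt .
\]
The $k$-sum gives $M^{-2}\sum_k\omega_M^{k[(j-\ell)-(p-q)]}$, which equals $M^{-1}$ when $j-\ell\equiv p-q\pmod M$ and vanishes otherwise, by orthogonality of characters as in the proof of Theorem~\ref{thm:cyclic-visibility-ru}. This single computation yields both the block decomposition along the spaces $\mathcal D_r$ and the prefactor $1/M$. The remaining time integral is $\int_0^Te^{t(\lambda_{j\ell}+\overline{\lambda_{pq}})}dt=\varphi_T(\lambda_{j\ell}+\overline{\lambda_{pq}})$, which gives \eqref{eq:diagonal-gramian-block-ru}.

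For the rank formula \eqref{eq:diagonal-gramian-rank-ru} I would use Theorem~\ref{thm:finite-time-gramian-ru}(i): $\operatorname{Ran}\mathfrak G_T=\cK_{\rm QPE}$. Because $\mathfrak G_T$ is block diagonal with respect to the orthogonal decomposition $\bigoplus_r\mathcal D_r$, its range is the direct sum of the ranges of the blocks, with $\operatorname{Ran}\mathfrak G_{T,r}\subseteq\mathcal D_r$. Theorem~\ref{thm:diagonal-generator-dimension-ru} gives $\cK_{\rm QPE}\cap\mathcal D_r=\Span\{B_{r,\zeta}\}$, which has dimension $|\Lambda_r|$. Intersecting both decompositions with $\mathcal D_r$ then gives $\operatorname{Ran}\mathfrak G_{T,r}=\Span\{B_{r,\zeta}:\zeta\in\Lambda_r\}$.

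As a cross-check I would also verify the rank directly. Group the indices of class $r$ by the value $\lambda_{j\ell}=\zeta$. The block matrix is constant on each group, so it equals $S^{*}KS$ up to the factor $1/M$. Here $K_{\zeta\zeta'}=\varphi_T(\zeta+\overline{\zeta'})=\int_0^Te^{t\zeta}\overline{e^{t\zeta'}}dt$ is the Gram matrix of the functions $e^{t\zeta}$ on $[0,T]$, and $S$ is the grouping map. The exponentials with distinct exponents are linearly independent on $[0,T]$, so $K$ is positive definite of size $|\Lambda_r|$. Since $S$ has full row rank, the block has rank exactly $|\Lambda_r|$.

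The step needing the most care is this rank claim. Specifically, one must check that the direct-sum splitting of the range matches the splitting of $\cK_{\rm QPE}$ from Theorem~\ref{thm:diagonal-generator-dimension-ru}; that match is what lets the global rank identity pass down to each individual block. The direct Gram-matrix argument settles it independently, so that is the route I would present if any subtlety arises.
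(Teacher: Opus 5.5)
Your proposal is correct and follows essentially the paper's route. The paper first replaces the effects by the normalized Fourier combinations $\widehat F_r=M^{-1/2}F_r$, a unitary change of the generating family that leaves $\mathfrak G_T$ unchanged; you instead perform the equivalent character sum directly inside $\langle e_{j\ell},\mathfrak G_Te_{pq}\rangle_{\rm HS}$. Your rank argument via $\operatorname{Ran}\mathfrak G_T=\cK_{\rm QPE}$ and the decomposition of Theorem~\ref{thm:diagonal-generator-dimension-ru} is the paper's. Your additional factorization of the block as $S^{*}KS$, with $K$ the positive definite Gram matrix of the distinct exponentials $e^{t\zeta}$ on $[0,T]$, is a valid self-contained confirmation of the rank that relies on neither theorem.
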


\begin{proof}
The normalized discrete Fourier transform of the outcomes replaces the effects
$E_k$ by orthogonal linear combinations
\[
 \widehat F_r:=\frac1{\sqrt M}F_r
 =\frac1{\sqrt M}
 \sum_{j-\ell\equiv r\ ({\rm mod}\,M)}e_{j\ell}.
\]
A unitary change of the generating family preserves the sum of rank-one
operators in \eqref{eq:finite-time-gramian-ru}. In addition,
\[
 e^{tG}\widehat F_r
 =\frac1{\sqrt M}
 \sum_{j-\ell\equiv r\ ({\rm mod}\,M)}
 e^{t\lambda_{j\ell}}e_{j\ell}\in\mathcal D_r.
\]
The spaces $\mathcal D_r$ are orthogonal, so the Gramian is a direct sum of
blocks. Integrating the products of coordinates gives
\eqref{eq:diagonal-gramian-block-ru}. Equality of ranks follows from
\eqref{eq:gramian-range-ru} and the decomposition
\eqref{eq:general-diagonal-visible-space-ru}.
\end{proof}

\begin{corollary}[Generator depending on the ordinary diagonal]
\label{cor:ordinary-diagonal-dimension-ru}
If $G(e_{j\ell})=\mu_{j-\ell}e_{j\ell}$ and $\mu_0=0$, then
\begin{equation}
 \dim_{\mathbb C}\cK_{\rm QPE}
 =1+\sum_{r=1}^{M-1}\#\{\mu_r,\mu_{r-M}\}
 =M+\#\{r:\mu_r\ne\mu_{r-M}\}.
 \label{eq:ordinary-diagonal-visible-dimension-ru}
\end{equation}
\end{corollary}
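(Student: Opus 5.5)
This is a direct specialization of Theorem~\ref{thm:diagonal-generator-dimension-ru}. I will identify the residue spectra $\Lambda_r$ for the eigenvalues $\lambda_{j\ell}=\mu_{j-\ell}$ and then count them. Here $j-\ell$ ranges over the ordinary diagonals $d\in\{-(M-1),\ldots,M-1\}$, and $\mu_d$ is defined for these $d$. The hypothesis of the theorem is satisfied because $G$ is diagonal on the matrix units. The dimension formula \eqref{eq:general-diagonal-visible-space-ru} therefore applies, and it gives $\dim_{\mathbb C}\cK_{\rm QPE}=\sum_{r\in\mathbb Z_M}|\Lambda_r|$.

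First I will describe the residue classes. For $r=0$ the pairs with $j-\ell\equiv0\pmod M$ and $0\le j,\ell<M$ are exactly the pairs with $j=\ell$. Hence $\Lambda_0=\{\mu_0\}=\{0\}$, and $|\Lambda_0|=1$. For $1\le r\le M-1$ I will use the counting already recorded before \eqref{eq:qpe-harmonic-combinations-ru}. In the class $r$ there are $M-r\ge1$ pairs with $j-\ell=r$ and $r\ge1$ pairs with $j-\ell=r-M$, and no other value of $j-\ell$ occurs, since $|j-\ell|<M$. Both ordinary diagonals are nonempty, so
\[
 \Lambda_r=\{\mu_r,\mu_{r-M}\}, \qquad r=1,\ldots,M-1,
\]
and $|\Lambda_r|$ equals $2$ if $\mu_r\ne\mu_{r-M}$ and $1$ otherwise.

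Summing these cardinalities gives the first equality in \eqref{eq:ordinary-diagonal-visible-dimension-ru}. Writing $|\Lambda_r|=1+\mathbf 1_{\mu_r\ne\mu_{r-M}}$ for $r\ne0$ gives
\[
 1+(M-1)+\#\{r:\mu_r\ne\mu_{r-M}\}=M+\#\{r:\mu_r\ne\mu_{r-M}\},
\]
which is the second equality. The only point that needs care is the nonemptiness of both diagonals in every class $r\ne0$. This nonemptiness ensures that both values $\mu_r$ and $\mu_{r-M}$ actually appear in $\Lambda_r$; the counts $M-r$ and $r$ settle it. No positivity or semigroup property is needed beyond the given assumption $\mu_0=0$. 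That assumption is used only to make $|\Lambda_0|=1$, and even without it $\Lambda_0=\{\mu_0\}$ is a single value.
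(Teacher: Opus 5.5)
Your proposal is correct and takes the same route as the paper: you identify $\Lambda_0=\{\mu_0\}$ and $\Lambda_r=\{\mu_r,\mu_{r-M}\}$ for $1\le r<M$, then apply the dimension formula of Theorem~\ref{thm:diagonal-generator-dimension-ru}. Your explicit check that both ordinary diagonals in each class $r\neq0$ are nonempty, with lengths $M-r$ and $r$, makes precise a point the paper states without comment.
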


\begin{proof}
In the zero class only the eigenvalue $\mu_0$ occurs. For
$1\le r<M$ the ordinary index difference in the class $r$ takes two values:
$r$ and $r-M$. It remains to apply
Theorem~\ref{thm:diagonal-generator-dimension-ru}.
\end{proof}

The meaning of this count is easiest to trace on a Toeplitz matrix
$X_{j\ell}=x_{j-\ell}$. For $1\le r<M$ the cyclic sum merges
two ordinary diagonals of lengths $M-r$ and $r$:
\[
 c_r(X)=(M-r)x_r+r x_{r-M},\qquad
 c_r(e^{tG}X)=(M-r)e^{t\mu_r}x_r+r e^{t\mu_{r-M}}x_{r-M}.
\]
If $\mu_r\ne\mu_{r-M}$, the two components have distinct exponential
time dependences. For example, adding
$(ra,-(M-r)a)$ to $(x_r,x_{r-M})$ preserves $c_r(X)$
and changes $c_r(e^{tG}X)$ by
$r(M-r)a(e^{t\mu_r}-e^{t\mu_{r-M}})$.
The separation of these two components contributes an additional direction
to the Krylov closure.
The effects $E_k$ are Toeplitz, and a generator depending only on $j-\ell$
preserves this property. Therefore the visible space is contained
in the Toeplitz space of dimension $2M-1$;
the formulas hold for every initial state matrix.

\begin{proposition}[GKLS realization and dimension under degeneracies]
\label{prop:gkls-toeplitz-ru}
Let
\begin{equation}
 N=\sum_{j=0}^{M-1}j\proj j,
 \qquad H_\delta=-2\pi\delta N,
 \qquad L_a=\sqrt{2\alpha}\sum_{j=a}^{M-1}\proj j,
 \quad a=1,\ldots,M-1,
 \label{eq:brownian-jumps-ru}
\end{equation}
and
\begin{equation}
 \cL(\rho)=-i[H_\delta,\rho]
 +\sum_{a=1}^{M-1}\left(
 L_a\rho L_a^\dagger-\frac12\{L_a^\dagger L_a,\rho\}
 \right).
 \label{eq:brownian-gkls-ru}
\end{equation}
Then
\begin{equation}
 \bigl(e^{t\cL}\rho\bigr)_{j\ell}
 =e^{-\alpha t|j-\ell|}
  e^{2\pi i\delta t(j-\ell)}\rho_{j\ell}.
 \label{eq:gkls-entrywise-ru}
\end{equation}
Set $\mu_d=-\alpha|d|-2\pi i\delta d$ and
\begin{equation}
 q_r:=\#\{\mu_r,\mu_{r-M}\},
 \qquad r=1,\ldots,M-1.
\end{equation}
Then
\begin{equation}
 \dim_{\mathbb C}\cK_{\rm QPE}
 =1+\sum_{r=1}^{M-1}q_r
 =M+\#\{r:\mu_r\ne\mu_{r-M}\}.
 \label{eq:visible-degenerate-dimension-ru}
\end{equation}
In particular, for $M=2^n$ we obtain
\begin{equation}
 \dim_{\mathbb C}\cK_{\rm QPE}=
 \begin{cases}
 M, & \alpha=\delta=0,\\
 2M-2, & \alpha>0,\ \delta=0,\\
 2M-1, & \delta\ne0.
 \end{cases}
 \label{eq:visible-dimension-cases-ru}
\end{equation}
If $\mu_r\ne\mu_{r-M}$ for all $r=1,\ldots,M-1$, then
\begin{equation}
 \cK_{\rm QPE}=\mathcal T_M,
 \qquad \dim_{\mathbb C}\cK_{\rm QPE}=2M-1,
 \label{eq:visible-toeplitz-dimension-ru}
\end{equation}
where $\mathcal T_M$ is the space of complex Toeplitz matrices.
\end{proposition}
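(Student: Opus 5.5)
The plan is to compute the action of $\cL$ on matrix units and then feed the resulting diagonal structure into Corollary~\ref{cor:ordinary-diagonal-dimension-ru}. Write $P_a:=\sum_{j\ge a}\proj j$, so $L_a=\sqrt{2\alpha}P_a$ with $P_a$ a diagonal Hermitian projector. For $\rho=e_{j\ell}$ we have
\[
 L_ae_{j\ell}L_a^\dagger-\tfrac12\{L_a^\dagger L_a,e_{j\ell}\}
 =2\alpha\Bigl(\chi_a(j)\chi_a(\ell)-\tfrac12\chi_a(j)-\tfrac12\chi_a(\ell)\Bigr)e_{j\ell}
 =-\alpha\bigl(\chi_a(j)-\chi_a(\ell)\bigr)^2e_{j\ell},
\]
where $\chi_a(j)=\mathbf 1_{j\ge a}$. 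Summing over $a=1,\ldots,M-1$, the number of $a$ with $\chi_a(j)\ne\chi_a(\ell)$ is exactly $|j-\ell|$ (the thresholds separating $j$ and $\ell$). The Hamiltonian part gives $-i[H_\delta,e_{j\ell}]=2\pi i\delta(j-\ell)e_{j\ell}$. Hence $\cL(e_{j\ell})=\bigl(-\alpha|j-\ell|+2\pi i\delta(j-\ell)\bigr)e_{j\ell}$, and exponentiating gives \eqref{eq:gkls-entrywise-ru}.

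Next I pass to the adjoint generator $G=\cL^\dagger$ with respect to the Hilbert--Schmidt product. Since $\cL$ is diagonal in the orthonormal basis $e_{j\ell}$, $G$ is diagonal there with conjugate eigenvalues $\overline{-\alpha|d|+2\pi i\delta d}=\mu_d$, $d=j-\ell$. This matches the stated $\mu_d$ and $\mu_0=0$. Corollary~\ref{cor:ordinary-diagonal-dimension-ru} then gives \eqref{eq:visible-degenerate-dimension-ru} directly.

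For the case list \eqref{eq:visible-dimension-cases-ru}, I compare $\mu_r$ with $\mu_{r-M}$ for $1\le r\le M-1$. Their real parts are $-\alpha r$ and $-\alpha(M-r)$, and their imaginary parts are $-2\pi\delta r$ and $-2\pi\delta(r-M)$. If $\delta\ne0$, the imaginary parts differ by $2\pi\delta M\neq0$, so all $M-1$ pairs are distinct and the dimension is $2M-1$. If $\delta=0$ and $\alpha>0$, the values coincide iff $r=M-r$, i.e.\ only at $r=M/2$, which exists because $M=2^n$ is even. This gives $M+(M-2)=2M-2$. If $\alpha=\delta=0$, all $\mu$ vanish and the dimension is $M$.

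For \eqref{eq:visible-toeplitz-dimension-ru}, Theorem~\ref{thm:diagonal-generator-dimension-ru} gives the basis $B_{r,\zeta}$. When $\mu_r\ne\mu_{r-M}$, these are exactly the two ordinary diagonals $\sum_{j-\ell=r}e_{j\ell}$ and $\sum_{j-\ell=r-M}e_{j\ell}$. Together with $B_{0,0}=I$, they span all Toeplitz matrices, giving dimension $2M-1$. The main obstacle is mild: keeping the sign convention for the conjugation in the adjoint consistent with the stated $\mu_d$, and the counting identity $\sum_a(\chi_a(j)-\chi_a(\ell))^2=|j-\ell|$.
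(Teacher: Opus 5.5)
Your proposal is correct and follows the paper's proof essentially step by step. The shared steps are the threshold-counting identity $\sum_a(\chi_a(j)-\chi_a(\ell))^2=|j-\ell|$ for the dissipator, passing to $G(e_{j\ell})=\mu_{j-\ell}e_{j\ell}$, reduction to Corollary~\ref{cor:ordinary-diagonal-dimension-ru}, the same comparison of $\mu_r$ with $\mu_{r-M}$, and the identification of $B_{r,\mu_r},B_{r,\mu_{r-M}}$ with the ordinary diagonals. Your explicit remark that the Hilbert--Schmidt adjoint of an operator diagonal in the orthonormal basis $e_{j\ell}$ has conjugated eigenvalues fills in a step the paper leaves implicit.
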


\begin{proof}
Denote
$\lambda_a(j)=\sqrt{2\alpha}\,\mathbf1_{j\ge a}$, so that
$L_a\ket j=\lambda_a(j)\ket j$. The coefficient of the dissipator at
$\rho_{j\ell}$ equals
\begin{align*}
 \sum_{a=1}^{M-1}
 \left(\lambda_a(j)\lambda_a(\ell)
 -\frac{\lambda_a(j)^2+\lambda_a(\ell)^2}{2}\right)
 &=-\frac12\sum_{a=1}^{M-1}
 \bigl(\lambda_a(j)-\lambda_a(\ell)\bigr)^2\\
 &=-\alpha|j-\ell|.
\end{align*}
The last equality follows from the fact that exactly $|j-\ell|$ thresholds $a$
separate the indices $j$ and $\ell$. The Hamiltonian part gives
\[
 -i[H_\delta,\rho]_{j\ell}
 =2\pi i\delta(j-\ell)\rho_{j\ell}.
\]
Consequently each matrix element satisfies an independent scalar equation
whose solution coincides with \eqref{eq:gkls-entrywise-ru}.

For the adjoint generator the matrix units are eigenvectors:
\begin{equation}
 G(e_{j\ell})
 =\mu_{j-\ell}e_{j\ell},
 \qquad
 \mu_d=-\alpha|d|-2\pi i\delta d.
\end{equation}
Therefore formula \eqref{eq:visible-degenerate-dimension-ru} coincides with
\eqref{eq:ordinary-diagonal-visible-dimension-ru} of
Corollary~\ref{cor:ordinary-diagonal-dimension-ru}.

The equality $\mu_r=\mu_{r-M}$ splits into the conditions
\[
 \alpha r=\alpha(M-r),
 \qquad \delta r=\delta(r-M).
\]
The second condition gives $\delta=0$. For $\alpha=0$ all pairs coincide,
while for $\alpha>0$ the unique pair $r=M/2$ coincides. For $M=2^n$ this gives
the dimensions $M$, $2M-2$ and $2M-1$ of
\eqref{eq:visible-dimension-cases-ru}, respectively. When all pairs are
separated, the spectral sums $B_{r,\mu_r}$ and $B_{r,\mu_{r-M}}$ of
\eqref{eq:residue-eigenmode-ru} are the ordinary diagonals
$T_r$ and $T_{r-M}$. Together with $T_0$ they form a basis of
$\mathcal T_M$, which proves \eqref{eq:visible-toeplitz-dimension-ru}.
\end{proof}

The threshold operators $L_a$ in \eqref{eq:brownian-jumps-ru} define an
analytically solvable reference GKLS model directly on the counting register.

Formula \eqref{eq:damped-fejer-ru} is an exact consequence of
\eqref{eq:brownian-gkls-ru}.

\begin{proposition}[Crossover of the main peak]
\label{thm:peak-crossover-ru}
Let $\omega+\delta=k_*/M$ and $\alpha>0$. Then
\begin{equation}
 p_{k_*}=\frac{1}{M^2}\left[
 M+2\sum_{d=1}^{M-1}(M-d)e^{-\alpha d}
 \right].
 \label{eq:peak-exact-ru}
\end{equation}
For $\alpha M\ll1$
\begin{equation}
 p_{k_*}=1-\frac{\alpha(M^2-1)}{3M}+O(\alpha^2M^2),
 \label{eq:peak-short-ru}
\end{equation}
and for fixed $\alpha$ and $M\to\infty$
\begin{equation}
 p_{k_*}=\frac{\coth(\alpha/2)}{M}+O(M^{-2}).
 \label{eq:peak-long-ru}
\end{equation}
\end{proposition}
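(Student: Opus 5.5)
Substituting $\omega+\delta=k_*/M$ into the damped Fej\'er formula \eqref{eq:damped-fejer-ru} makes every cosine equal to $1$, which gives \eqref{eq:peak-exact-ru} directly. The rest of the plan is to evaluate the finite sum
\[
 S(\alpha):=\sum_{d=1}^{M-1}(M-d)x^d,\qquad x:=e^{-\alpha},
\]
in closed form and then expand it in the two regimes.

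\textbf{Closed form.} I would use the geometric-series identity
\[
 \sum_{d=1}^{M-1}(M-d)x^d=\frac{Mx}{1-x}-\frac{x(1-x^M)}{(1-x)^2}.
\]
It follows by writing $(M-d)=\sum_{m=d+1}^{M}1$ and exchanging the order of summation, or by differentiating the geometric sum. Hence
\[
 p_{k_*}=\frac{1}{M^2}\left[M\,\frac{1+x}{1-x}-\frac{2x(1-x^M)}{(1-x)^2}\right].
\]
Since $(1+x)/(1-x)=\coth(\alpha/2)$ for $x=e^{-\alpha}$, this can be written as
\[
 p_{k_*}=\frac{\coth(\alpha/2)}{M}-\frac{2x(1-x^M)}{M^2(1-x)^2}.
\]

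\textbf{Long-time regime.} For fixed $\alpha>0$ the second term is bounded by $2x/\bigl(M^2(1-x)^2\bigr)=O(M^{-2})$, because $0<1-x^M\le1$. This proves \eqref{eq:peak-long-ru} immediately.

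\textbf{Short-time regime.} I would not use the closed form here, since it contains cancelling singular terms as $\alpha\to0$. Instead I would Taylor expand the exact sum directly. Write $e^{-\alpha d}=1-\alpha d+R_d$ with $0\le R_d\le\alpha^2d^2/2$. The zeroth-order term reproduces the ideal value $1$, because $M+2\sum_d(M-d)=M^2$. The first-order term is
\[
 -\frac{2\alpha}{M^2}\sum_{d=1}^{M-1}(M-d)d=-\frac{2\alpha}{M^2}\cdot\frac{M(M^2-1)}{6}=-\frac{\alpha(M^2-1)}{3M}.
\]
The remainder satisfies
\[
 \frac{2}{M^2}\sum_d(M-d)R_d\le\frac{\alpha^2}{M^2}\sum_d(M-d)d^2=\frac{\alpha^2}{M^2}\cdot\frac{M^2(M^2-1)}{12}=O(\alpha^2M^2).
\]
This gives \eqref{eq:peak-short-ru}. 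The error bound is uniform and does not actually need $\alpha M\ll1$; that condition only ensures the correction term is small.

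\textbf{Main obstacle.} The only delicate points are bookkeeping: getting the two polynomial sums $\sum(M-d)d$ and $\sum(M-d)d^2$ right, and identifying $(1+x)/(1-x)$ with $\coth(\alpha/2)$. Neither is a conceptual difficulty.
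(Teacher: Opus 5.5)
Your proposal is correct and follows the paper's proof essentially step by step. In the short-time regime you use the same Taylor bound $0\le e^{-y}-1+y\le y^2/2$ together with $\sum_{d}(M-d)d=M(M^2-1)/6$ and $\sum_{d}(M-d)d^2=M^2(M^2-1)/12$, which gives the remainder bound $\alpha^2(M^2-1)/12$; in the long-time regime you use the same closed-form geometric sum and the identification $(1+x)/(1-x)=\coth(\alpha/2)$.
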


\begin{proof}
Formula \eqref{eq:peak-exact-ru} follows from
\eqref{eq:damped-fejer-ru}. For $x\ge0$
\[
 0\le e^{-x}-1+x\le\frac{x^2}{2}.
\]
Using the identities
\[
 \sum_{d=1}^{M-1}d(M-d)=\frac{M(M^2-1)}6,
 \qquad
 \sum_{d=1}^{M-1}d^2(M-d)=\frac{M^2(M^2-1)}{12},
\]
we obtain
\[
 p_{k_*}=1-\frac{\alpha(M^2-1)}{3M}+R_{\alpha,M},
 \qquad
 0\le R_{\alpha,M}\le\frac{\alpha^2(M^2-1)}{12}.
\]
This proves \eqref{eq:peak-short-ru} uniformly in the regime of bounded
$\alpha M$. For the second limit set $q=e^{-\alpha}\in(0,1)$. The exact
formula for the finite geometric sum gives
\[
 \sum_{d=1}^{M-1}(M-d)q^d
 =M\frac{q}{1-q}-\frac{q}{(1-q)^2}
 +\frac{q^{M+1}}{(1-q)^2}.
\]
After substitution we obtain
\[
 p_{k_*}=\frac1M\left(1+\frac{2q}{1-q}\right)
 -\frac{2q}{M^2(1-q)^2}
 +\frac{2q^{M+1}}{M^2(1-q)^2}.
\]
The equality $1+2q/(1-q)=\coth(\alpha/2)$ proves
\eqref{eq:peak-long-ru}; the constant in the remainder $O(M^{-2})$ depends only
on the fixed $\alpha$.
\end{proof}

The natural crossover scale is
\begin{equation}
 M_{\rm coh}\asymp\alpha^{-1},
 \qquad n_{\rm coh}\asymp\log_2(\alpha^{-1}).
 \label{eq:coherence-scale-ru}
\end{equation}
This scale describes the crossover of the main peak in the weak-damping
regime. For fixed $\alpha>0$ and increasing $M$, the phase-grid spacing is
$M^{-1}$ and the peak probability has the asymptotics
\eqref{eq:peak-long-ru}.

\subsection{Preparation and classical readout}

Let $B$ be a stochastic matrix of classical readout. Then the observed
distribution equals
\begin{equation}
 p^{\rm obs}=Bp^{(\alpha,\delta)}.
 \label{eq:readout-map-ru}
\end{equation}
For an independent symmetric flip of each of the $n$ bits with probability
$e_{\rm ro}$
\begin{equation}
 B=B_{e_{\rm ro}}^{\otimes n},
 \qquad
 B_e=\begin{pmatrix}1-e&e\\e&1-e\end{pmatrix}.
\end{equation}
Coupling the ideal outcome with an independently corrupted bit string gives
\begin{equation}
 \frac12\norm{Bp-p}_1\le 1-(1-e_{\rm ro})^n\le ne_{\rm ro}.
 \label{eq:readout-bound-ru}
\end{equation}
A preparation error changes the initial state and is propagated through the
visible restriction of the generator. A classical readout error acts on the
final probability vector according to formula \eqref{eq:readout-map-ru}.

\section{Computational verification and numerical applications}

The numerical checks use $\alpha=0.00265$, $\delta=0.001$,
$\omega=0.375$ and dimensionless evolution time $t=1$. These are reference
model parameters.
The checks independently construct the full sparse GKLS superoperator, the orthonormal Krylov basis
$\Span\{G^rE_k\}$ and the analytic kernel
\eqref{eq:damped-fejer-ru}. For $M=4,8,16,32,64$ the full, reduced and analytic
models agree with a maximal error of order $10^{-14}$, and the Krylov
algorithm recovers $\dim\cK_{\rm QPE}=2M-1$ for the nonzero phase shift used
in this comparison. Independent tests of diagonal
Hamiltonian generators for $3\le M\le7$ confirm formula
\eqref{eq:general-diagonal-visible-space-ru}, including intermediate
degeneracies and the upper bound $M^2-M+1$. The three degeneracy regimes of
Proposition~\ref{prop:gkls-toeplitz-ru} are checked at $M=4,8,16,32,64$, and
the two channels of Proposition~\ref{prop:ramsey-separation-ru} reproduce the
distributions \eqref{eq:ramsey-separated-qpe-ru}. These comparisons verify
the consistency of the three numerical representations of the model.

\subsection{Stability beyond the Toeplitz model}

To check Theorem~\ref{thm:residual-stability-ru} consider
$M\in\{8,16,32\}$ and keep the trial space
$\cK_0=\mathcal T_M$ of dimension $2M-1$ built for the generator
\eqref{eq:brownian-gkls-ru}. A GKLS perturbation is added to it:
\begin{equation}
 \cL_\eta=\cL_0-i\eta[H_1,\,\cdot\,]+\eta\mathcal D_{J_1},
 \qquad
 \mathcal D_J(X)=JXJ^\dagger-\frac12\{J^\dagger J,X\},
 \label{eq:non-toeplitz-perturbation-ru}
\end{equation}
where
\begin{align}
 H_1&=c_H\sum_{j=0}^{M-2}
 \left(1+0.35\cos\frac{2\pi(j+1/2)}M\right)
 \bigl(\ket j\bra{j+1}+\ket{j+1}\bra j\bigr),\\
 J_1&=c_J\sum_{j=1}^{M-1}
 \sqrt{0.35+0.65\frac{j}{M-1}}\,\ket{j-1}\bra j.
 \label{eq:non-toeplitz-operators-ru}
\end{align}
Here $\eta\ge0$ so that the added dissipator has nonnegative intensity.
The constants $c_H,c_J$ are chosen from the conditions
$\norm{H_1}_{2\to2}=\norm{J_1}_{2\to2}=1$. Inhomogeneous mixing of
neighbouring branches and amplitude relaxation destroy Toeplitz invariance.

The full dynamics is computed by exponentiating the $M^2\times M^2$
superoperator, while the approximate model uses $A_\eta=PG_\eta P$,
where $G_\eta=\cL_\eta^\dagger$, on the fixed
space $\cK_0$. In Table~\ref{tab:stability-ru} the quantity
$\varepsilon_p=\max_k|p_k-\widetilde p_k|$ is compared with the right-hand side
of \eqref{eq:residual-bound-ru} at $t=1$.

\begin{table}[ht]
\centering
\caption{Non-Toeplitz perturbation: actual error and a posteriori estimate.}
\label{tab:stability-ru}
\begin{tabular}{rrrrr}
\toprule
$M$ & $\dim\cK_0$ & $\norm{R_\eta}_{2\to2}$ & $\varepsilon_p$ & estimate
\eqref{eq:residual-bound-ru}\\
\midrule
$8$  & $15$ & $9.72\times10^{-2}$ & $1.55\times10^{-3}$ & $9.78\times10^{-2}$\\
$16$ & $31$ & $8.99\times10^{-2}$ & $8.02\times10^{-4}$ & $9.03\times10^{-2}$\\
$32$ & $63$ & $8.40\times10^{-2}$ & $4.53\times10^{-4}$ & $8.44\times10^{-2}$\\
\bottomrule
\end{tabular}
\end{table}

The table corresponds to $\eta=10^{-1}$. Automatic exact closure gives the
full dimensions $64$ and $256$ for $M=8$ and $16$. For $M=32$ the full
$1024\times1024$ superoperator is compared with a reduction of dimension $63$.
In all three cases the probability error stays below the a posteriori bound
over the range $10^{-5}\le\eta\le10^{-1}$ and decreases with $M$ for the
chosen normalization of the perturbation. The base generator commutes with the
orthogonal projection onto $\mathcal T_M$, and the pure clock state belongs to
$\mathcal T_M$. Therefore Proposition~\ref{prop:quadratic-residual-ru} gives
the special guarantee $\varepsilon_p=O(\eta^2)$ alongside the uniform guarantee
$O(|\eta|)$ of Corollary~\ref{cor:small-perturbation-ru}. A linear regression
of $\log\varepsilon_p$ on $\log\eta$ over the range
$10^{-5}\le\eta\le10^{-2}$ gives an exponent of approximately $2$
for each of $M=8,16,32$. This is a deterministic check of the scaling over the
indicated range of perturbations.

The margin of the universal bound in the table is approximately $63$, $113$
and $186$ times. Here the $O(|\eta|)$ guarantee for an arbitrary initial state
of a given purity is compared with the $O(\eta^2)$ error for a compatible
state. Therefore the ratio of the bound to the error can grow
as $1/|\eta|$.
At $t=1$ the semigroup factor used is close to unity,
and the numerical bound is close to $\norm{R_\eta}_{2\to2}$.
For the given state the additional first-order cancellation is accounted for by
the special estimate \eqref{eq:quadratic-residual-bound-ru}.

\begin{figure}[ht]
\centering
\includegraphics[width=\textwidth]{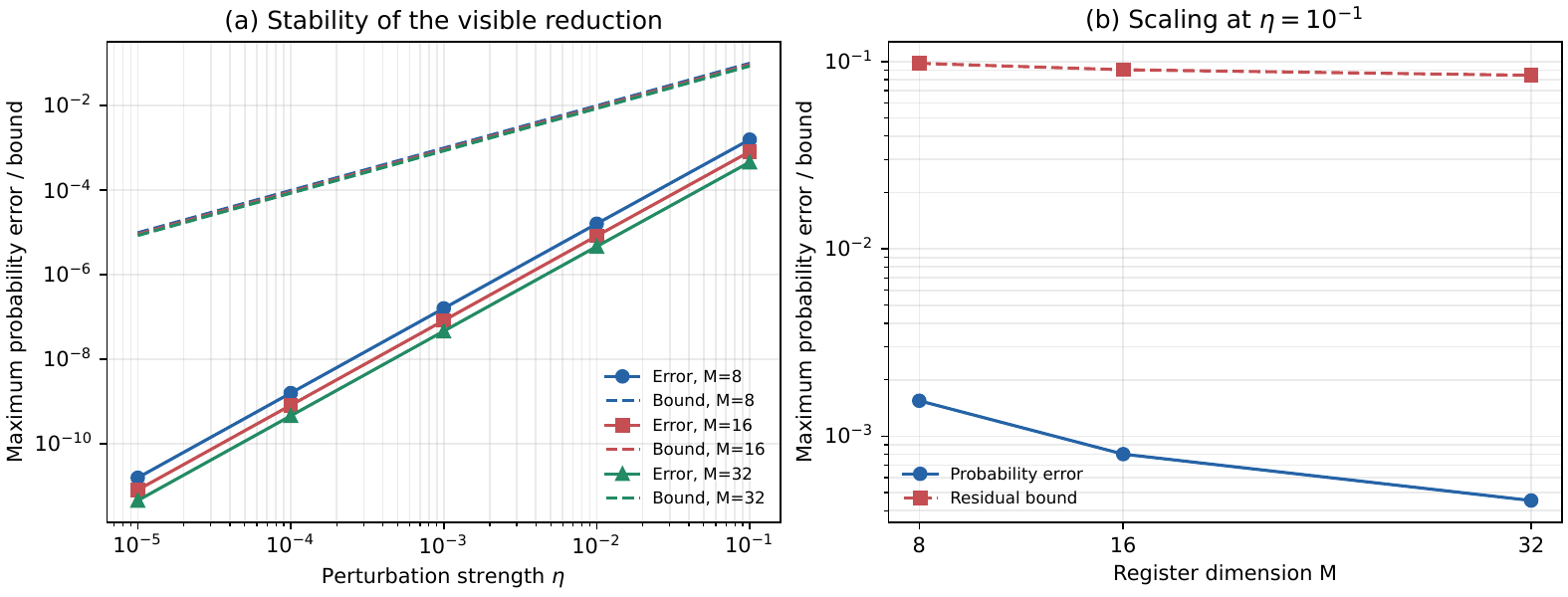}
\caption{Stability of the visible reduction under a non-Toeplitz perturbation
of the generator of strength $\eta$. (a)~Maximal error of the outcome
probabilities (solid lines) and the a posteriori bound (dashed lines) against
$\eta$ on logarithmic scales, for counting registers of $M=8$ (circles), $16$
(squares) and $32$ (triangles) levels. The error decreases quadratically with
$\eta$, whereas the bound is linear. (b)~The same two quantities at
$\eta=10^{-1}$ against the register size $M$: probability error (circles,
solid line) and bound (squares, dashed line). The perturbation destroys the
exact Toeplitz reduction, and the norm of the residual controls its accuracy.}
\label{fig:stability-ru}
\end{figure}

\section{Discrete and time-dependent dynamics}

For a sequence of quantum channels $\Phi_1,\ldots,\Phi_R$ the discrete analogue
of \eqref{eq:visible-space-ru} is the backward space
\begin{equation}
 \cK_{\rm circ}:=\Span\left\{
 E_k,\Phi_R^\dagger(E_k),
 \Phi_{R-1}^\dagger\Phi_R^\dagger(E_k),\ldots,
 \Phi_1^\dagger\cdots\Phi_R^\dagger(E_k)
 \right\}_k.
 \label{eq:discrete-visible-space-ru}
\end{equation}
This construction is consistent with the reduction of discrete and controlled
quantum models \cite{GrigolettoTicozzi2025,GrigolettoViolaTicozzi2025}. For a
time-dependent generator the visible space is generated by solutions of
$\dot E_k(t)=-\cL(t)^\dagger E_k(t)$ with the final conditions $E_k(T)=E_k$.

\section{Conclusion}

The minimal space of QPE-visible observables is the Krylov span of the effects
of the full POVM with respect to the adjoint generator. Universal minimality
characterizes it as the smallest exact linear model of all outcome
probabilities. For QPE the measurement projection is computed through cyclic
sums of diagonals. If the generator is diagonal on the matrix units, then
$\dim\cK_{\rm QPE}=\sum_r|\Lambda_r|$ and
$M\le\dim\cK_{\rm QPE}\le M^2-M+1$; the basis is given by the spectral classes
inside the residues $j-\ell\bmod M$.

The finite-time Gramian unites the algebraic rank and the approximate
visibility: its range coincides with $\cK_{\rm QPE}$, the leading eigenspaces
give the optimal spectral truncation, and the weight of the perturbed modes
outside the exact unperturbed visible space is of order $O(\eta^2)$.
The a posteriori estimate in terms of the residual $(I-P)G P$ controls each
probability for an approximately invariant reduction. Two random-unitary models
with a common Ramsey contrast and different QPE distributions show that the
measurement distinguishes cyclic combinations of higher harmonics.

The explicit GKLS realization gives the damped Fej\'er kernel and the exact
dimension regimes $M$, $2M-2$ and $2M-1$. The full propagator, the Krylov
reduction and the analytic kernel agree in the calculations to within an
accuracy of order $10^{-14}$. The non-Toeplitz GKLS perturbation demonstrates
compliance with the a posteriori bound and an approximately quadratic error
order at $M=8,16,32$. The QPE-visible modes determine the output distribution,
and the residual estimate bounds the error of a reduction built from them.

\section*{Competing interests}
The authors declare no conflicts of interest.

\section*{Funding}
This research received no external funding.

\section*{Author contributions}
Marina~O.~Gordeichuk: Conceptualization, Software, Validation, Visualization, Writing -- review \& editing.
Oleg~M.~Kiselev: Conceptualization, Methodology, Formal analysis, Supervision, Writing -- original draft.

\section*{Data and code availability}
The numerical scripts, result tables and instructions for reproducing the
checks are included in the ancillary files accompanying this preprint.
The scripts compare the full GKLS dynamics, Krylov reduction and analytic
kernel, check spectral degeneracies and the Ramsey counterexample, and
evaluate the non-Toeplitz perturbation experiment. These are deterministic
numerical checks of the stated mathematical models.

\begingroup
\raggedright

\endgroup

\end{document}